\documentclass[11pt]{article}

\usepackage{amssymb}

\usepackage[latin1]{inputenc}
\usepackage[english]{babel}
\usepackage{amsmath}
\usepackage{amsfonts}
\usepackage{graphicx}
\usepackage{amssymb}
\usepackage{fancyvrb}
\usepackage{wrapfig}
\usepackage{color}
\usepackage{enumerate}
\usepackage{hyperref}
\usepackage[margin=2cm]{geometry}
\usepackage{amsthm}
\usepackage{environ}

\theoremstyle{plain}
\newtheorem{theorem}{Theorem}

\newtheorem{lemma}[theorem]{Lemma}
\newtheorem{claim}{Claim}

\theoremstyle{definition}
\newtheorem{definition}[theorem]{Definition}

\newcommand{\diam}{\, \mathrm{diam}}
\newcommand{\perim}{\, \mathrm{perim}}
\newcommand{\Mitchell}{{Mitchell}}
\newcommand{\vare}{\varepsilon}
\newcommand{\Wlog}{without loss of generality}
\newcommand{\dd}{\,\mathrm{d}}
\newcommand{\set}[1]{\{#1\}}
\newcommand{\conv}{\textnormal{conv}}
\newcommand{\E}{$_E$ }

\usepackage[explicit]{titlesec}

\titleformat{\section}
  {\Large}{\textbf{\thesection}}{1em}{\textbf{#1}}

\titleformat{\subsection}
  {}{\textbf{\thesubsection}}{1em}{\textbf{#1}}

\begin{document}

\title{The guillotine subdivision approach for TSP with neighborhoods revisited}

\author{Sophie Spirkl\\ \normalsize Research Institute for Discrete Mathematics, Lenn\'estr. 2,
  53113 Bonn, Germany \\
\href{mailto:spirkl@or.uni-bonn.de}{\texttt{\emph{spirkl@or.uni-bonn.de}}}}

\maketitle

\hrulefill

\begin{abstract}
  The Euclidean TSP with neighborhoods (TSPN) is the following
  problem: Given a set $\mathcal{R}$ of $k$ regions (subsets of
  $\mathbb{R}^2$), find a shortest tour that visits at least one point
  from each region. We study the special cases of disjoint, connected,
  $\alpha$-fat regions (i.e., every region $P$ contains a disk of
  diameter $\frac{\diam(P)}{\alpha}$) and disjoint unit disks. 

  For the latter, Dumitrescu and Mitchell \cite{dumi} proposed an
  algorithm based on Mitchell's guillotine subdivision approach for
  the Euclidean TSP \cite{mitchell-tsp}, and claimed it to be a PTAS.
  However, their proof contains a severe gap, which we will close in
  the following. Bodlaender et al. \cite{grigoriev} remark that their
  techniques for the minimum corridor connection problem based on
  Arora's PTAS for TSP \cite{arora} carry over to the TSPN and yield
  an alternative PTAS for this problem.

  For disjoint connected $\alpha$-fat regions of varying size,
  Mitchell \cite{mitchell-ptas} proposed a slightly different PTAS
  candidate. We will expose several further problems and gaps in this
  approach. Some of them we can close, but overall, for $\alpha$-fat
  regions, the existence of a PTAS for the TSPN remains open.

\vspace*{0.3cm}

\noindent\emph{Keywords:}
TSP with neighbourhoods, approximation scheme, guillotine
subdivision, travelling salesman problem
\end{abstract}

\hrulefill

\section{TSP among $\alpha$-fat Regions}

\subsection{Problem Definition and Background}

The Euclidean TSP with neighborhoods (TSPN) is the following problem:
Given a set $\mathcal{R}$ of $k$ regions (subsets of $\mathbb{R}^2$), find a shortest
tour that visits at least one point from each region.

Even for disjoint or connected regions, the TSPN does not admit a PTAS
unless $P=NP$ \cite{safra}. Aiming for a PTAS under additional
restrictions on the input, \cite{mitchell-ptas} and \cite{elb} require
connected and disjoint regions, and both introduce a notion of
$\alpha$-fatness. 

\begin{definition}[\cite{mitchell-ptas}] \label{def:afat-mit} A region
  $P$ of points in the plane is \textbf{$\boldsymbol{\alpha}$-fat}, if
  it contains a disk of diameter $\dfrac{\mathrm{diam}(P)}{\alpha}$.
\end{definition}

\begin{definition}[\cite{elb}] \label{def:afat-elb} A region $P$ in
  the plane is \textbf{$\boldsymbol{\alpha}$-fat$_{\boldsymbol{E}}$},
  if for every disk $\Theta$, such that the center of $\Theta$ is
  contained in $P$ but $\Theta$ does not fully contain $P$, the area
  of the intersection $P \cap \Theta$ is at least $\frac{1}{\alpha}$
  times the area of $\Theta$.
\end{definition}

For $\alpha$-fat\E regions, Chan and Elbassioni \cite{elb-qptas}
developed a quasi-polynomial time approximation scheme (even for a
more general notion of fatness and in more general metric
spaces). Mitchell \cite{mitchell-ptas} was the first to consider
$\alpha$-fat regions. Bodlaender et al. \cite{grigoriev} introduced
the notion of geographic clustering, where each region contains a
square of size $q$ and has diameter at most $cq$ for a fixed constant
$c$, which is a special case of $\alpha$-fat regions. They showed that
the TSPN with geographic clustering admits a PTAS based on Arora's
framework for the Euclidean TSP.

In all cases, $\alpha$-fatness provides a lower bound (in terms of
their diameters) on the length of a tour visiting disjoint regions, but in
the following, the second definition will turn out to be more useful. 
Throughout this paper, $\alpha \geq 1$ and $\vare > 0$ will be
constants. 

\subsection{Mitchell's Algorithm}

The core of Mitchell's algorithm is dynamic programming, which
requires certain restrictions on the space of solutions. To this end,
Mitchell claims the following: 

There is an almost optimal tour (up to a factor of $1 + \vare$) such that:
\begin{enumerate}[(A)]
\item The tour visits the minimum-diameter axis-aligned rectangle
  $R_0$ intersecting all regions, and therefore has to be located
  within a window $W_0$ of diameter $\mathcal{O}(\diam(R_0))$
  intersecting $R_0$. We distinguish internal regions
  $\mathcal{R}_{W_0}$ that are entirely contained in $W_0$, and
  external regions.
\item We can require the vertices of the tour
  to lie on a polynomial-size grid (in $k$ and $\frac{1}{\vare}$)
  within this rectangle.
\item The tour is a connected Eulerian graph
  fulfilling the ``$(m,M)$-guillotine property" (which roughly states
  that there is a recursive decomposition of the bounding box of the
  tour by cutting it into subwindows such that the structure of
  internal regions and tour segments on the cut is of bounded
  complexity in $m$ and $M$), again at a loss of only $\vare$ for
  appropriately chosen $m$ and $M$.
\item The tour obeys (B) and (C) simultaneously. 
\item The external regions can be dealt with efficiently as there is
  only a polynomial number of ways for them to be visited by an
  $(m,M)$-guillotine tour (i. e. for every cut, there is a polynomial
  number of options for which regions will be visited on which side of
  it). 
\end{enumerate}

Under these assumptions, Mitchell states a dynamic programming
algorithm. Starting with a window (axis-parallel rectangle) $W_0$,
which is assumed to contain all edges of the tour, every subwindow $W$
defines several subproblems (see Figure~\ref{pic:sub}). The
subproblems also enumerate all possible configurations of edge
segments intersecting its boundary, connection patterns of these
segments, internal (contained in $W_0$) and external (intersecting
$\partial W_0$) regions to be visited inside and outside of the
window, cuts (horizontal or vertical lines dividing $W$ into two
subwindows) and configurations on the cut. For each cut, the
subproblems to both sides will already have been solved through a
bottom-up recursion, therefore we can select an optimal solution with
compatible configurations. The optimum (shortest) solution for the
subproblem (among all possible cuts) is stored and can be used for the
next recursion level.

\begin{figure}[hbt]
  \centering
  \includegraphics[width=7cm]{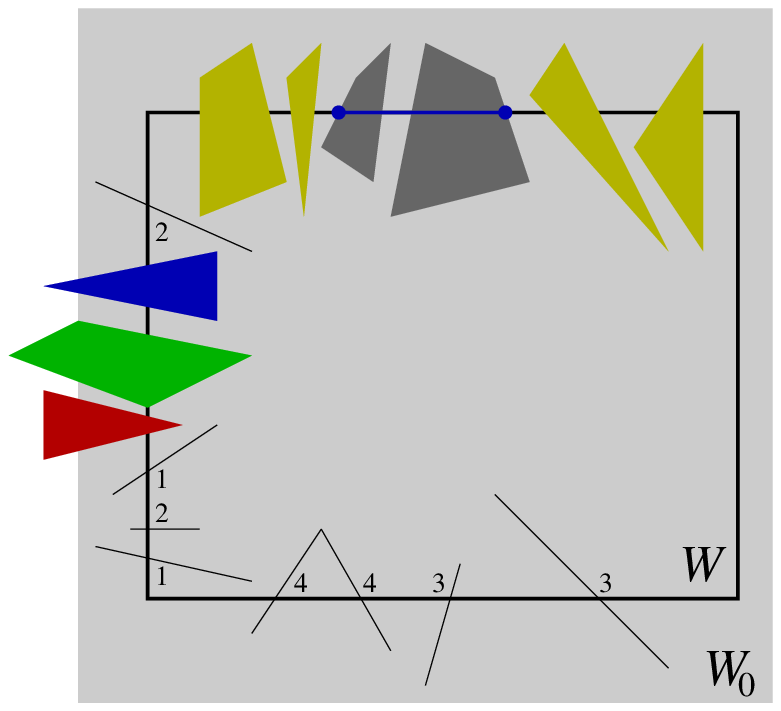}
  \caption{Structure of a subproblem}
  \label{pic:sub}
\end{figure}

Assumption A is false, and will be rectified in Section~\ref{sec:loc},
Lemma~\ref{lem:loc}. Statement B is correct. For the third statement,
a stronger assumption on the regions can be used to mend the upper
bound for the additional length incurred in Mitchell's construction;
see Section~\ref{sec:guillotine}, Theorem~\ref{thm:charge}. Preserving
connectivity in a graph with guillotine property is difficult, not
accounted for in \cite{mitchell-ptas} and for Mitchell's line of
argument not clear. We present a counterexample in
Section~\ref{sec:cnn}, Figure~\ref{pic:region-span}. While not proven
by Mitchell, statement D is still correct (if assumption C holds for
the given tour), a technical
argument will be sketched in Section~\ref{sec:grid}. The last
statement is again false, but can be fixed using a different notion of
$\alpha$-fatness, which we will show in Section~\ref{sec:ext}.

\subsection{Localization} \label{sec:loc}

In \Mitchell's algorithm, the search for a (nearly) optimal tour among
a set $\mathcal{R}$ of connected regions is restricted to a small
neighborhood of the minimum-diameter axis-aligned rectangle $R_0$ that
intersects all regions.

\begin{claim}[{\cite[Lemma 2.4]{mitchell-ptas}}] 
  There exists an optimal tour $T^*$ of the regions in $\mathcal{R}$
  that lies within the ball $B(c_0, 2 \diam (R_0))$ of radius $2 \diam
  (R_0)$ around the center point $c_0$ of $R_0$.
\end{claim}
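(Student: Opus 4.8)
The plan is to start from an arbitrary optimal tour and retract it radially onto the target ball, exploiting that every region already has a representative deep inside the ball. First I would record the elementary geometric fact that makes localization plausible: since $R_0$ meets every region and every point of the rectangle $R_0$ lies within $\tfrac12\diam(R_0)$ of its center $c_0$, each region $P\in\mathcal{R}$ contains a point $p_P$ with $\|p_P-c_0\|\le\tfrac12\diam(R_0)$. Thus every region already intersects the much smaller ball $B(c_0,\tfrac12\diam(R_0))$, and the radius $2\diam(R_0)$ in the statement leaves a comfortable gap of slack.

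Next I would introduce the radial retraction $\pi$ onto $\overline{B}:=\overline{B}(c_0,2\diam(R_0))$, fixing every point of $\overline{B}$ and sending every exterior point to the nearest point of the bounding circle $\partial\overline{B}$. Since $\pi$ is $1$-Lipschitz, the image $\pi(T^*)$ of an optimal tour $T^*$ is a closed curve inside $\overline{B}$ of length at most $|T^*|$. To turn this into a proof of the statement I would fix, among all shortest tours, one that in addition minimizes, say, the total distance of the curve from $c_0$; if such a tour left $\overline{B}$, then $\pi$ would produce a tour that is no longer and strictly more central, the desired contradiction --- provided $\pi(T^*)$ still visits every region.

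Region-preservation under $\pi$ is the crux and the step I expect to be the main obstacle. A region $P$ that $T^*$ visits only at a point $q$ with $\|q-c_0\|>2\diam(R_0)$ may be \emph{missed} by $\pi(T^*)$, because the inner part of $P$ near $p_P$ and the projected point $\pi(q)$ generally lie at different angles on $\partial\overline{B}$, so $\pi(q)$ need not belong to $P$. To repair this I would use connectivity of the regions: since $P$ is connected and meets both $B(c_0,\tfrac12\diam(R_0))$ (at $p_P$) and the exterior of $\overline{B}$ (at $q$), it must cross $\partial\overline{B}$ at some point $s_P\in P$. I would then splice a visit to $s_P$ into $\pi(T^*)$ by rerouting along the boundary circle, and charge the inserted boundary arcs against the length that $T^*$ spent outside $\overline{B}$ and that the retraction removed.

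The delicate point, and where I expect the real work to lie, is the length bookkeeping of this repair: one must guarantee that the total length of the boundary detours needed to re-collect all missed regions does not exceed the length saved by the retraction, so that the modified tour is still no longer than $T^*$. Here I would exploit the gap between the guaranteed reach $\tfrac12\diam(R_0)$ of the regions and the ball radius $2\diam(R_0)$, together with disjointness of the regions to control how many distinct crossing points $s_P$ can cluster on a single boundary arc. If this amortized charging can be made to break even, optimality (and the secondary minimization) forces $T^*$ to stay inside $\overline{B}$, which is exactly the claim.
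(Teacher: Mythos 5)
You are attempting to prove a statement that the paper quotes only in order to refute it: this Claim is Mitchell's \cite[Lemma 2.4]{mitchell-ptas}, and the paper's ``treatment'' of it is a counterexample (Figure~\ref{pic:loc}) with disjoint, connected, $\alpha$-fat regions, showing that a nearly optimal tour need not lie within $\mathcal{O}(\diam(R_0))$ of $R_0$ at all. So no proof can exist, and the step you yourself flag as the crux --- the amortized charging of the boundary detours against the retraction savings --- is exactly where the attempt breaks irreparably. Your opening observation is correct (every region meets $B(c_0,\tfrac12\diam(R_0))$, since $R_0$ stabs all regions and lies in that ball), but it does not localize an optimal tour, because regions of diameter much larger than $\diam(R_0)$ can, besides touching $R_0$, come together again at an arbitrarily remote location. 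In the paper's counterexample the regions can be jointly visited near a distant rectangle $R_1$, oriented diagonally to the axis-aligned $R_0$, by a tour shorter by a factor approaching $\sqrt{2}$ than anything available near $R_0$ (where a tour essentially has to traverse $\partial R_0$, of length up to $2\sqrt{2}\diam(R_0)$). Consequently \emph{every} closed curve inside $\overline{B}(c_0,2\diam(R_0))$ visiting all regions is strictly longer than $T^*$; the $1$-Lipschitz retraction saves at most the length $T^*$ spends outside the ball, while your detours must reassemble, inside the ball, a tour of all $k$ regions whose cheapest common meeting place is far away --- so the bookkeeping can never break even. Your secondary-minimization trick does not rescue this: the contradiction you want (``no longer and strictly more central'') simply never materializes, since $\pi(T^*)$ together with the splices is longer than $T^*$, not equal. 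And because $R_1$ can be placed arbitrarily far from $R_0$, replacing the radius $2\diam(R_0)$ by $C\diam(R_0)$ for any constant $C$ fails on the same configuration.

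For comparison, the paper salvages only what is actually true in Mitchell's argument: if an optimal tour intersects $R_0$, the localization holds; otherwise no region is fully contained in $R_0$, so every region crosses $\partial R_0$, making $\partial R_0$ itself a feasible tour and yielding $2\diam(R_0) \leq L^* \leq \perim(R_0) \leq 2\sqrt{2}\diam(R_0)$, i.e., $L^*$ is known up to a constant factor. The correct substitute for the Claim is then Lemma~\ref{lem:loc}: either some region has diameter $\mathcal{O}(L^*)$ and an optimal tour must lie within $\mathcal{O}(L^*)$ of it, or all regions are large and, for polygonal regions, an adaptation of Gudmundsson and Levcopoulos \cite[Section 5.1]{gudm} (an optimal tour bounds a convex polygon and lies within $\mathcal{O}(L^*)$ of a vertex of some region) produces $\mathcal{O}(n)$ candidate rectangles of size $\mathcal{O}(L^*)$, one of which contains an optimal tour. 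Your retraction idea can at best recover the constant-factor localization via the fallback tour $\partial R_0$, never the Claim itself.
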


However, Figure~\ref{pic:loc} shows that in general, this is false: a
nearly optimal tour need not be within $\mathcal{O}(\diam R_0)$
distance of $R_0$, even if the regions are $\alpha$-fat, disjoint and
connected as in Figure~\ref{pic:loc}. The vicinity of $R_0$ only
contains a $\sqrt{2}$-approximation of the optimum, which is instead
found within $R_1$.

\begin{figure}[hbt] \center
  \includegraphics[width=3.8cm]{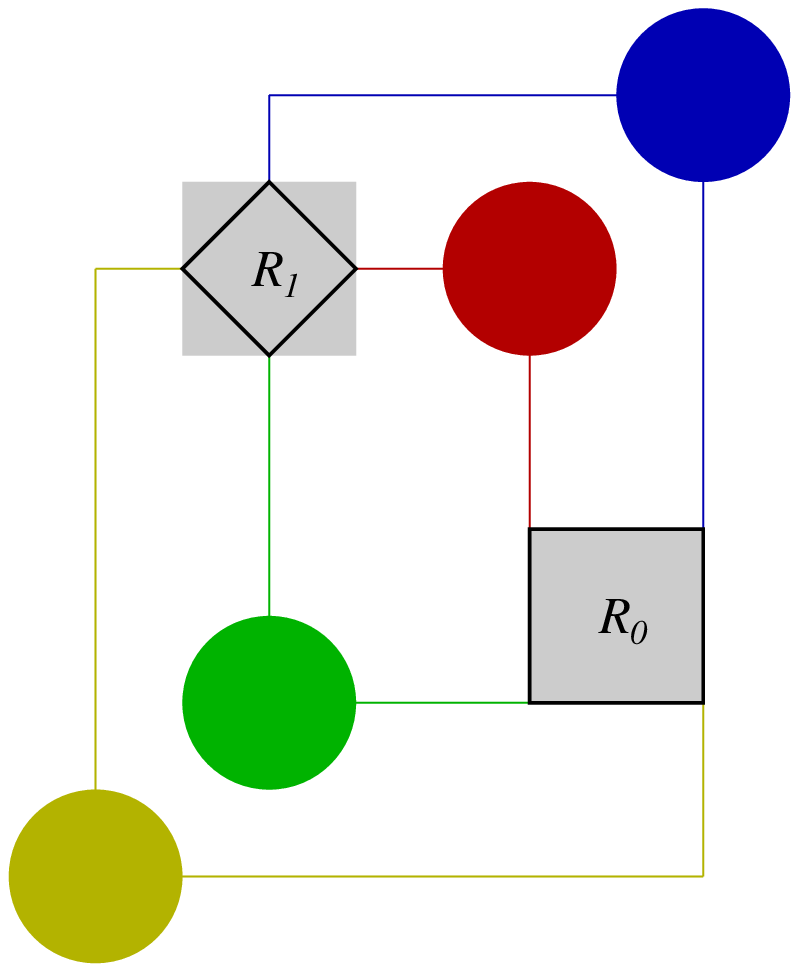} 
  \caption{Localization of an optimal tour}
  \label{pic:loc}
\end{figure}

We now show how this problem can be resolved: If an optimal tour
intersected $R_0$, \Mitchell's lemma would be correct. He argues that,
if some regions were to be visited far away from $R_0$, the path
leading to them could be replaced by $\partial R_0$, which due to
connectivity must visit those regions. Otherwise, no region can be
fully contained in $R_0$, so the same argument yields that every
region must intersect $\partial R_0$, making $\perim(R_0) \leq
2\sqrt{2}\diam(R_0)$ an upper bound for the length $L^*$ of an optimal
solution. Combining this with the fact that $L^* \geq 2\diam(R_0)$,
$L^*$ is now known up to a constant factor.

Now, there are two cases to consider: If there is a small region (of
diameter $\mathcal{O}(L^*)$), an area of diameter $\mathcal{O}(L^*)$
around this region must contain an optimal tour. Otherwise, all
regions are of diameter $> \mathcal{O}(L^*)$. If the regions are
required to be polygons, it is possible to limit the number of
possible (approximate) locations of an optimal tour by adapting an
approach by J. Gudmundsson and C. Levcopoulos \cite[Section
5.1]{gudm}, who show that in that case a tour must be the boundary of
a convex polygon. This additional structural information then allows
them to deduce the existence of an optimal tour within
$\mathcal{O}(L^*)$ of a vertex of one of the polygonal
regions. Considering rectangles of the right size (since $L^*$ is
known up to a constant factor) yields the following lemma:

\begin{lemma} \label{lem:loc} 
  For a set $\mathcal{R}$ of disjoint, connected polygons in the plane
  with a total of $n$ vertices, $\mathcal{O}(n)$ rectangles of size
  $\mathcal{O}(L^*)$ can be found in polynomial time, such that an
  optimal tour of length $L^*$ is contained in at least one of them.
\end{lemma}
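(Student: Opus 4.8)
The plan is to pin down $L^*$ up to a constant factor, use this to fix the side length of the candidate rectangles once and for all, and then read the anchors for these rectangles off the vertices of the input polygons, distinguishing the two regimes announced in the text. First I would record the two bounds on $L^*$ already sketched above. If the optimal tour meets $R_0$, the claim of \cite{mitchell-ptas} applies verbatim; otherwise every region must cross $\partial R_0$ (a region contained in $R_0$ would force the tour into $R_0$), so $\partial R_0$ is itself a feasible tour and $L^* \le \perim(R_0) \le 2\sqrt{2}\,\diam(R_0)$. Together with the lower bound $L^* \ge 2\diam(R_0)$ this gives $2\diam(R_0) \le L^* \le 2\sqrt{2}\,\diam(R_0)$, so a single choice of side length $\Theta(\diam(R_0)) = \Theta(L^*)$ suffices and no guessing over scales is needed.

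I would also use that a closed tour of length $L^*$ has diameter at most $L^*/2$; consequently, once we know a single point of the tour up to distance $\mathcal{O}(L^*)$, the whole tour lies in an axis-aligned rectangle of side $\Theta(L^*)$ centred near that point. The entire task therefore reduces to producing $\mathcal{O}(n)$ candidate anchor points, one of which is within $\mathcal{O}(L^*)$ of $T^*$. To this end I would, for every vertex $v$ among the $n$ vertices of the polygons in $\mathcal{R}$, place a constant number of axis-aligned rectangles of side $\Theta(\diam(R_0))$, with the constant large enough that any tour passing within $\mathcal{O}(L^*)$ of $v$ is fully contained in one of them. This yields $\mathcal{O}(n)$ rectangles, and each step — computing $R_0$, reading off the vertices, and placing the rectangles — is polynomial.

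It then remains to certify that some region vertex is within $\mathcal{O}(L^*)$ of $T^*$. If some region $P$ has $\diam(P) = \mathcal{O}(L^*)$, then $T^*$ visits a point of $P$ and, having diameter at most $L^*/2$, lies within $\mathcal{O}(L^*)$ of every vertex of $P$, all of which are among our anchors. Otherwise all regions have diameter $\gg L^*$, and here I would invoke the adaptation of Gudmundsson and Levcopoulos \cite[Section 5.1]{gudm}: under polygonality an optimal tour can be taken to be the boundary of a convex polygon, and this structural fact forces $T^*$ to pass within $\mathcal{O}(L^*)$ of a vertex of one of the polygons, again one of our anchors.

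The main obstacle is precisely this large-region regime. When all regions are large relative to $L^*$, the tour may touch a region in the interior of a long edge, arbitrarily far from any vertex, so localization is not immediate; it genuinely requires the convex-boundary characterization of the optimal tour and the ensuing argument that the tour cannot be short unless it sits near a place where the region boundaries bend, that is, near a vertex. Transcribing the Gudmundsson--Levcopoulos bound into the present setting, and checking that the implied constant in the $\mathcal{O}(L^*)$ distance is absolute and that the argument relies only on disjointness, connectivity and polygonality, is the delicate part; the remaining bookkeeping (rectangle count, side length, and running time) is routine.
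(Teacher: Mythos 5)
Your proposal is correct and follows essentially the same route as the paper: both pin down $L^*$ up to a constant factor via $R_0$ (distinguishing whether the tour meets $R_0$, and otherwise using $\partial R_0$ as a feasible tour to get $L^* \le \perim(R_0) \le 2\sqrt{2}\diam(R_0)$), then split into the small-region case, where the tour of diameter at most $L^*/2$ is localized near a vertex of that region, and the all-large case, which is delegated to the adaptation of Gudmundsson--Levcopoulos \cite[Section 5.1]{gudm} placing an optimal tour within $\mathcal{O}(L^*)$ of some polygon vertex, yielding $\mathcal{O}(n)$ rectangles of side $\Theta(\diam(R_0))$ anchored at the $n$ vertices. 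The paper's own justification is exactly this sketch and likewise black-boxes the Gudmundsson--Levcopoulos step, so your flagging of that transcription as the one delicate point matches the level of detail the paper itself provides.
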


\subsection{Guillotine subdivisions and the charging scheme} \label{sec:guillotine}

If there were no further problems, a PTAS could be obtained by
applying \Mitchell's algorithm to all rectangles from
Lemma~\ref{lem:loc}. The main idea of the algorithm is to find a
nearly optimal tour that satisfies the $(m, M)$-guillotine property,
which will be defined in the following.

Consider a polygonal planar embedding $S$ of a graph $G$ with edge set
$E$ an a total length of $L$. Without loss of generality, let $E$ be a
subset of the interior of the unit square $B$. Let $\mathcal{R}$ be a
set of regions and $W_0$ the axis-aligned bounding box of an optimal
tour (we can afford to enumerate all possibilities on a grid and get
a $(1+\vare)$-approximation of $W_0$); let $\mathcal{R}_{W_0}$ be the
subset of regions that lie in the interior of $W_0$.

\begin{definition}[\cite{mitchell-ptas}]
  A \textbf{window} is an axis-aligned rectangle $W \subseteq B$. Let
  $l$ be a horizontal or vertical line through the interior of $W$,
  then $l$ is called a \textbf{cut} for $W$. 

  The intersection $l \cap E \cap \, \mathrm{int}(W)$ consists of a
  set of subsegments of the restriction of $E$ to $W$. Let $p_1,
  \dots, p_\xi$ be the endpoints of these segments ordered along
  $l$. Then the \textbf{$\boldsymbol{m}$-span} $\sigma_m(l)$ of $l$
  (with respect to $W$) is empty, if $\xi \leq 2m-2$, and consists of
  the line segment $\overline{p_m p_{\xi-m+1}}$ otherwise (see
  Figure~\ref{pic:cuts}). A cut $l$ is \textbf{$\boldsymbol{m}$-good}
  with respect to $W$ and $E$, if $\sigma_m(l) \subseteq E$.
\end{definition}

\begin{figure}[hbt] \center
  \includegraphics[width=12cm]{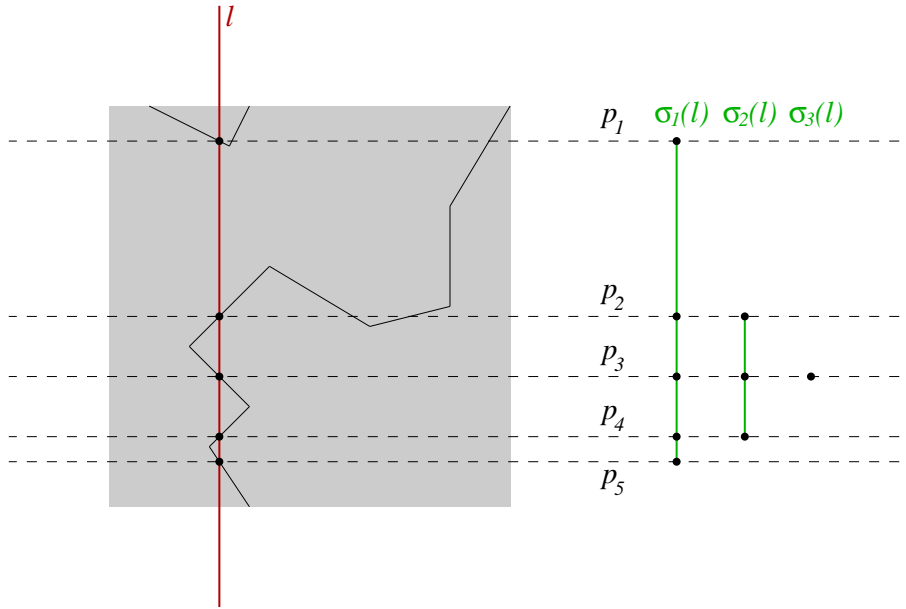}
  \caption{A cut $l$ and its $m$-span for $m=1,2,3$. The cut is
    $3$-good, but not $2$-good.}
  \label{pic:cuts}
\end{figure}

Mitchell defines the $M$-region-span analogously:

\begin{definition}[\cite{mitchell-ptas}] \label{def:m-region-span} The
  intersection $l \cap \mathcal{R}_{W_0} \cap \, \mathrm{int}(W)$ of a
  cut $l$ with the regions $\mathcal{R}_{W_0}$ restricted to $W$
  consists of a set of subsegments of $l$. The
  \textbf{$\boldsymbol{M}$-region-span} $\Sigma_M(l)$ of $l$ is the
  line segment $\overline{p_M p_{\xi-M+1}}$ along $l$ from the $M$th
  entry point $p_M$, where $l$ enters the $M$th region of
  $\mathcal{R}_{W_0}$, to the $M$th-from-the-last exit point
  $p_{\xi-M+1}$, assuming that the number of intersected regions is
  $\xi > 2(M-1)$. Otherwise, the $M$-region-span is empty.
\end{definition}

\hspace*{-0.7cm}
  \begin{minipage}{14.8cm}
    \quad This definition is ambiguous if the regions are not
    required to be convex, because the order of the regions is unclear
    and there might be a number of points at which $l$ enters or exits
    the same region. For example, on the right, many of the
    line segments connecting two red dots could be the $1$-region-span
    according to this definition.
  \end{minipage}
  \hfill
  \begin{minipage}{2cm} \centering
        \includegraphics[width=\linewidth]{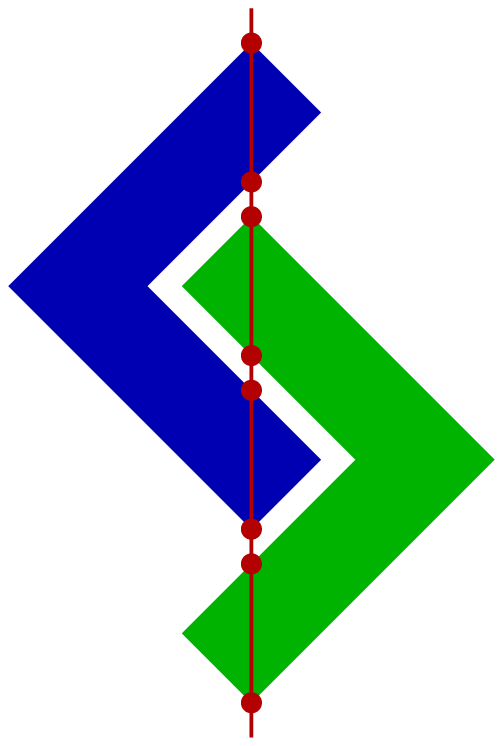}
  \end{minipage}

Furthermore, \Mitchell's $M$-region-span does not ``behave well'' in
the corresponding charging scheme. We propose the following
alternative definition. Its benefits will become apparent in
the proof of Theorem~\ref{thm:charge} and in
Figure~\ref{pic:charging-problems}:

\begin{definition} \label{def:region-span} The intersection $l \cap
  \mathcal{R}_{W_0} \cap \, \mathrm{int}(W)$ of a cut $l$ with the
  internal regions $\mathcal{R}_{W_0}$ restricted to $W$ consists of a
  (possibly empty) set of subsegments of $l$. Let $p_1, \dots, p_\xi$
  be the endpoints of these segments which are in $\mathrm{int}(W)$,
  ordered along $l$. Then the \textbf{$\boldsymbol{M}$-region-span}
  $\Sigma_M(l)$ of $l$ (with respect to $\mathcal{R}_{W_0}$ and $W$)
  is empty, if $\xi \leq 2M-2$ and consists of the line segment
  $\overline{p_M p_{\xi-M+1}}$ otherwise (see Figure~\ref{pic:rcuts}).

  A cut $l$ is \textbf{$\boldsymbol{M}$-region-good} with respect to
  $W$, $\mathcal{R}_{W_0}$ and $E$, if $\Sigma_M(l) \subseteq E$.
\end{definition}

\begin{figure}[hbt] \center
  \includegraphics[width=12cm]{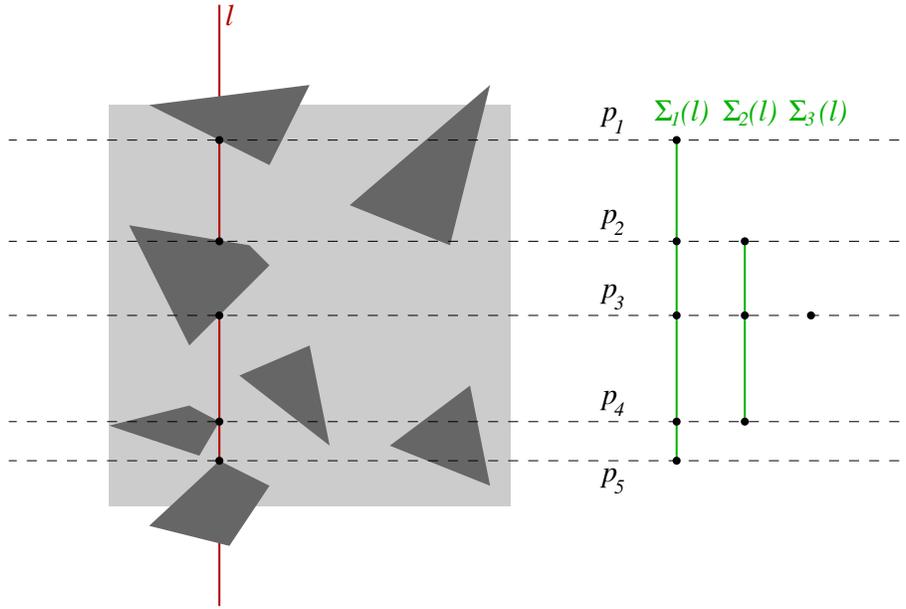}
  \caption{A cut $l$ and its $M$-region-span (according to
    Definition~\ref{def:region-span}) for $M=1,2,3$.}
  \label{pic:rcuts}
\end{figure}

With either definition of the $M$-region-span, we can define a
corresponding version of the $(m, M)$-guillotine property as follows:

\begin{definition}[{\cite{mitchell-ptas}}] 
  An edge set $E$ of a polygonal planar embedded graph satisfies the
  \textbf{$\boldsymbol{(m, M)}$-guillotine property} with respect to a window $W$
  and regions $\mathcal{R}_{W_0}$, if one of the following conditions
  holds:
  \begin{itemize}
    \item No edge of $E$ lies completely in the interior of $W$, \textit{or}
    \item There is a cut $l$ of $W$ that is $m$-good with respect to
      $W$ and $E$ and $M$-region-good with respect to $W$,
      $\mathcal{R}_{W_0}$ and $E$, such that $l$ splits $W$ into two
      windows $W'$ and $W''$, for which $E$ recursively satisfies the
      $(m, M)$-guillotine property with respect to $W'$ resp. $W''$
      and $\mathcal{R}_{W_0}$. 
  \end{itemize}
\end{definition}

It is clear from this definition that transforming a tour into an edge
set with this property will induce an additional length that depends
both on the edges and the regions present. The crucial property of a
tour connecting disjoint, $\alpha$-fat regions is that their number
and diameter provide a lower bound on its length. It is worth noting
that the following lemma holds for a tour among $\alpha$-fat regions
in either Mitchell's (Definition~\ref{def:afat-mit}) or Elbassioni's
and Fishkin's (Definition~\ref{def:afat-elb}) sense:

\begin{lemma}[{\cite[Lemma 2.6]{mitchell-ptas}}] \label{lem:afat-mit}
  Let $\varepsilon > 0$, then there is a constant $C$ (that depends on
  $\vare$ and $\alpha$), such that for every \textsc{TSPN}-tour $T^*$
  of length $L^*$, connecting $k$ disjoint, connected, $\alpha$-fat
  ($\alpha$-fat$_E$) regions in the plane, $L^* \geq C \cdot
  \dfrac{\lambda(\mathcal{R}_{W_0})}{\log (\frac{k}{\varepsilon})}$,
  where $\lambda(\mathcal{R}_{W_0})$ is the sum of the diameters of
  the regions that are completely contained in the axis-aligned
  bounding box $W_0$ of $T^*$.
\end{lemma}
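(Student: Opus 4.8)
\emph{Proof strategy.} The plan is to reduce everything to a two-dimensional packing estimate and then pay the logarithmic factor by grouping the regions into dyadically many size classes. The first step is to replace $\alpha$-fatness by a clean lower bound on the \emph{area} of a region in terms of its diameter: I claim that every $\alpha$-fat region $P$ (in either sense) of diameter $d := \diam(P)$ satisfies $\mathrm{area}(P) \ge c\, d^2/\alpha^2$ for an absolute constant $c > 0$. For Definition~\ref{def:afat-mit} this is immediate, since $P$ contains a disk of diameter $d/\alpha$ and hence area $\pi d^2/(4\alpha^2)$. For Definition~\ref{def:afat-elb}, I would fix two points $x, y \in P$ with $|x-y| = d$ and apply the defining inequality to the disk $\Theta$ of radius $d/2$ centered at $x$: since $y \notin \Theta$, the disk does not fully contain $P$, so $\mathrm{area}(P) \ge \mathrm{area}(P \cap \Theta) \ge \mathrm{area}(\Theta)/\alpha = \pi d^2/(4\alpha)$. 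Using $\alpha \ge 1$, both cases give $\mathrm{area}(P) \ge c\,d^2/\alpha^2$, so from here on only this area bound is used and the argument is uniform in the two notions of fatness.

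The core step is a single-class packing bound. Because $T^*$ is a TSPN-tour it meets every region in $\mathcal{R}_{W_0}$, and every such region is contained in the bounding box $W_0$, so its diameter is at most $\diam(W_0) \le L^*/\sqrt 2$ (a closed tour must cross its bounding box in each coordinate direction twice, giving $L^* \ge 2\max(w,h) \ge \sqrt 2\,\diam(W_0)$). First I would discard the regions of diameter below $\vare L^*/k$; there are at most $k$ of them, so they contribute at most $\vare L^*$ to $\lambda(\mathcal{R}_{W_0})$. The remaining regions have diameters in $[\vare L^*/k,\, L^*/\sqrt 2]$, a range spanning $\mathcal{O}(\log(k/\vare))$ dyadic classes. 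Fix the class of regions with $\diam(P) \in [2^j, 2^{j+1})$. Since $T^*$ meets each such region and each has diameter below $2^{j+1}$, every region of the class lies inside the tube of radius $2^{j+1}$ around $T^*$, whose area is at most $2 L^* \cdot 2^{j+1} + \pi (2^{j+1})^2 = \mathcal{O}(L^*\, 2^j)$, where the quadratic end term is absorbed using $2^j = \mathcal{O}(L^*)$. The regions of one class are disjoint and each has area at least $c\,2^{2j}/\alpha^2$, so their number $n_j$ satisfies $n_j \le \mathcal{O}(\alpha^2 L^*/2^j)$, whence the class contributes $\lambda_j \le n_j \cdot 2^{j+1} = \mathcal{O}(\alpha^2 L^*)$ to the diameter sum.

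Summing the per-class estimate over the $\mathcal{O}(\log(k/\vare))$ classes gives $\sum_j \lambda_j = \mathcal{O}(\alpha^2 L^* \log(k/\vare))$, and adding back the discarded small regions yields $\lambda(\mathcal{R}_{W_0}) \le \mathcal{O}(\alpha^2 L^* \log(k/\vare)) + \vare L^*$. Rearranging produces $L^* \ge C \cdot \lambda(\mathcal{R}_{W_0})/\log(k/\vare)$ with $C = \Theta(1/\alpha^2)$, as required. I expect the main obstacle to be the per-class packing bound: one must argue carefully that the additive $\pi(2^{j+1})^2$ cap contribution to the tube area is dominated by the linear term $2L^* \cdot 2^{j+1}$, which is exactly where the bound $\diam(P) \le \diam(W_0) = \mathcal{O}(L^*)$ (so $2^j = \mathcal{O}(L^*)$) is needed; without restricting to regions genuinely inside $W_0$ this estimate fails, and this is precisely why the lemma speaks only of the internal regions $\mathcal{R}_{W_0}$. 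The remaining bookkeeping---the choice of the cutoff $\vare L^*/k$ and the count of dyadic classes---is routine.
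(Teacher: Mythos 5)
Your proof is correct, and it is essentially the argument the paper relies on: the paper gives no proof of its own here, deferring instead to Mitchell's Lemma 2.6, whose proof has exactly your structure --- discard regions of diameter below $\vare L^*/k$ at a cost of $\vare L^*$, split the rest into $\mathcal{O}(\log(k/\vare))$ dyadic diameter classes, and bound each class by packing the disjoint fat regions into a tube of radius $\mathcal{O}(2^j)$ around the tour, using $\diam(P) \le \diam(W_0) = \mathcal{O}(L^*)$ to absorb the $\pi r^2$ end term. Your diametral-disk computation for $\alpha$-fat$_E$ regions (the disk $\Theta$ of radius $d/2$ centered at one endpoint of a diameter, giving $\mathrm{area}(P) \ge \pi d^2/(4\alpha)$) is precisely the ``easy adaptation'' the paper asserts without detail, and, just as the paper's parenthetical remark predicts, your argument never uses connectedness of the regions.
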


\cite{mitchell-ptas} provides a proof for this lemma with respect to
$\alpha$-fat regions in the sense of Definition~\ref{def:afat-mit},
which can easily be adapted for $\alpha$-fat$_E$ regions as in
Definition~\ref{def:afat-elb} (even without requiring connected
regions).

In the dynamic programming algorithm, $M$ can be chosen as
$\mathcal{O}(\frac{1}{\varepsilon} \log (\frac{k}{\varepsilon}))$;
therefore we can ``afford'' to construct additional edges of length
$\mathcal{O}(\frac{\diam(P_i)}{M})$ for every $P_i \in
\mathcal{R}_{W_0}$ and still obtain a $(1 +
\mathcal{O}(\varepsilon))$-approximation algorithm.

The following definitions were not explicitly given in
\cite{mitchell-ptas} and are therefore adapted from the corresponding
definitions in \cite{mitchell-tsp} for the standard TSP:

\begin{definition}
  Let $l$ be a cut through window $W$, and $p$ a point on $l$, then
  $p$ is called \textbf{$\boldsymbol{m}$-dark} with respect to $W$ and
  an edge set $E$, if $p$ is contained in the $m$-span of the cut
  through $p$ that is orthogonal to $l$.

  Similarly, a point $p$ on a cut $l$ is said to be
  \textbf{$\boldsymbol{M}$-region-dark}, if it is contained in the
  $M$-region-span of a cut through $p$ that is orthogonal to $l$.
  
  A segment on a cut $l$ is called \textbf{$\boldsymbol{m}$-dark} and
  \textbf{$\boldsymbol{M}$-region-dark}, respectively, if every point
  of it is.

  A cut $l$ is called \textbf{favorable} if the sum of the lengths of
  its $m$-dark and $M$-region-dark portions is at least as big as the
  sum of the lengths of its $m$-span and $M$-region-span.
\end{definition}

While our definition of the $M$-region-span removes the ambiguity and
ensures the correctness of the proof techniques used by Mitchell, it
yields a weaker (but correct) overall statement:

\begin{theorem}[{Corrected version of \cite[Theorem
    3.1]{mitchell-ptas}}] \label{thm:charge} Let $G$ be an planar
  embedded connected graph, with edge set $E$ consisting of line
  segments of total length $L$. Let $\mathcal{R}$ be a set of
  disjoint, polygonal, $\alpha$-fat regions and assume that $E \cap
  P_i \neq \emptyset$ for every $P_i \in \mathcal{R}$. Let $W_0$ be
  the axis-aligned bounding box of $E$. Then, for any positive
  integers $m$ and $M$, there exists an edge set $E' \supseteq E$ that
  obeys the $(m, M)$-guillotine property with respect to $W_0$ and
  regions $\mathcal{R}_{W_0}$, and for which the length of $E'$ is at
  most $L + \frac{\sqrt{2}}{m} L + \frac{\sqrt{2}}{M}
  \Lambda(\mathcal{R}_{W_0})$, where $\Lambda(\mathcal{R}_{W_0})$ is
  the sum of the perimeters of the regions in $\mathcal{R}_{W_0}$.
\end{theorem}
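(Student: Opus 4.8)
The plan is to build $E'$ by a top-down recursive guillotine partition of $W_0$, mirroring the construction for the ordinary Euclidean TSP but carrying the $M$-region-span bookkeeping in parallel with the $m$-span bookkeeping. Starting from $W=W_0$, if no edge of the current edge set lies entirely in the interior of $W$ we stop; otherwise we select a single \emph{favorable} cut $l$ of $W$, adjoin both its $m$-span $\sigma_m(l)$ and its $M$-region-span $\Sigma_M(l)$ (segments of $l$) to the edge set, and recurse on the two subwindows $W'$ and $W''$ into which $l$ splits $W$. Adjoining $\sigma_m(l)$ makes $l$ $m$-good and adjoining $\Sigma_M(l)$ makes it $M$-region-good, so the output satisfies the $(m,M)$-guillotine property by construction, and since $E'\supseteq E$ the hypothesis $E\cap P_i\neq\emptyset$ is automatically preserved. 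A key structural observation keeps the accounting clean: every span we add lies on the boundary between the two subwindows it creates, hence never in the strict interior of any descendant window. Consequently, in each recursive subproblem the interior edges are portions of the original edges of $E$, and all $m$-spans and $M$-region-spans that the recursion ever computes are determined by $E$ and the original regions alone, with no previously added length affecting interior crossing counts. Two things then remain: that a favorable cut always exists, and that the total span length added is at most $\frac{\sqrt 2}{m}L+\frac{\sqrt 2}{M}\Lambda(\mathcal{R}_{W_0})$. (Termination is routine, since cuts may be restricted to the finitely many coordinates determined by edge endpoints, edge crossings, and region-boundary crossings.)

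For the existence of a favorable cut I would use a Fubini/averaging argument. Fix $W$ and integrate the $m$-span length of the horizontal cut at height $y$ over all $y$. By the definition of $m$-darkness (a point $p$ is $m$-dark exactly when it lies in the $m$-span of the orthogonal cut through it), this equals the integral over $x$ of the $m$-dark length of the vertical cut at $x$, and symmetrically with the two directions exchanged; the identical identity holds for $\Sigma_M$ and $M$-region-darkness. Writing, for each of the two directions, the integral over its cuts of $(\text{total dark length}-\text{total span length})$, where ``dark'' means $m$-dark plus $M$-region-dark and ``span'' means $m$-span plus $M$-region-span, the horizontal and vertical integrals are negatives of one another, so they sum to $0$. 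Hence one direction has nonnegative average of $(\text{dark}-\text{span})$ over its cuts, and in that direction some cut $l$ has combined dark length at least its combined span length: this $l$ is favorable.

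The length bound is the heart of the argument and is handled by a charging scheme. By favorability, the span length added at each cut is at most the dark length of that cut, so it suffices to bound the total dark length accumulated over the recursion. For the edge part this is the standard Mitchell estimate: a point is $m$-dark only if at least $m$ edge crossings separate it from the window boundary along the orthogonal cut, so each maximal $m$-dark segment can be paid for by the $\geq m$ edge crossings that witness it; summing over the partition and passing from crossing counts to the horizontal and vertical projections of the edges (whose sum is at most $\sqrt 2$ times the edge length), and using that interior edges are always original edges charged only inside windows that contain them, the edge contribution totals at most $\frac{\sqrt 2}{m}L$. The region part is treated in complete analogy: each maximal $M$-region-dark segment is paid for by the $\geq M$ region-boundary crossings that witness it, and passing from crossing counts to perimeters via the same projection bound yields the term $\frac{\sqrt 2}{M}\Lambda(\mathcal{R}_{W_0})$.

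I expect the region charging to be the main obstacle, and it is precisely where Definition~\ref{def:region-span} is needed. Under Mitchell's original $M$-region-span the entry/exit ordering of a non-convex region along a cut is ambiguous, and a single region may be entered and left many times, so the same stretch of one region's boundary could be made to witness region-darkness on arbitrarily many cuts, breaking the $O(1/M)$-per-perimeter accounting. The plan is to verify that under Definition~\ref{def:region-span}, which counts the endpoints $p_1,\dots,p_\xi$ of the actual intersection subsegments lying in $\mathrm{int}(W)$ and spans from $p_M$ to $p_{\xi-M+1}$, every crossing that is charged lies strictly inside the current window and is separated from the recursion boundary by the cut just made; the laminar structure of the partition then guarantees that each portion of each region boundary is charged $O(1/M)$ times overall. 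Combined with the projection estimate, this produces the claimed $\frac{\sqrt 2}{M}\Lambda(\mathcal{R}_{W_0})$ term and makes transparent why the corrected statement, though weaker than Mitchell's original claim, is the one the charging scheme actually supports.
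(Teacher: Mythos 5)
Your proposal is correct and takes essentially the same route as the paper: a recursion on favorable cuts (whose existence is shown by the same change-of-order-of-integration argument), the observation that inserted spans lie on subwindow boundaries and hence are never charged, Mitchell's once-per-direction charging with the $\sqrt{2}$ bounding-box/projection factor giving the $\frac{\sqrt{2}}{m}L$ term, and the reduction of the region term to the same edge-charging argument by replacing each region with its boundary under Definition~\ref{def:region-span}, which is exactly how the paper arrives at $\frac{\sqrt{2}}{M}\Lambda(\mathcal{R}_{W_0})$. The differences are only presentational: the paper defers the edge-charging details to \cite{mitchell-tsp}, which you spell out explicitly.
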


The only deviation from \Mitchell's version is that the length of $E'$
is bounded using $\Lambda(\mathcal{R}_{W_0})$ instead of
$\lambda(\mathcal{R}_{W_0})$ as defined in Lemma~\ref{lem:afat-mit}.

To apply the lower bound on the optimum obtained from $\alpha$-fatness
as in the original paper, a further restriction can be imposed on the
regions -- that the ratio between the perimeter and diameter of
regions is bounded by a constant, which is true, for example, for
convex regions. Since polygonal regions only ensure that this ratio is
bounded by $\mathcal{O}(n)$, this is a very restrictive assumption.

Note further that this theorem, as well as \cite[Theorem
3.1]{mitchell-ptas}, does not establish the existence of a
\emph{connected} edge set with the properties of $E'$; see
Section~\ref{sec:cnn}.

The proof relies on the following key lemma by Mitchell: 

\begin{lemma}[{\cite[Lemma 3.1]{mitchell-ptas}}] \label{lem:fav-cut}
  For any planar embedded graph $G$ with edge set $E$, any set of
  regions $\mathcal{R}_{W_0}$ and any window $W$, there is a favorable
  cut.
\end{lemma}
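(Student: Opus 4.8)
The plan is to establish the existence of a favorable cut via an averaging (integral) argument over all candidate cut positions, following the classical technique from Mitchell's work on the standard TSP. Recall that a cut $l$ is favorable if the total length of its $m$-dark and $M$-region-dark portions is at least the total length of its $m$-span plus its $M$-region-span. The key observation is that ``darkness'' is defined with respect to the orthogonal family of cuts: a point $p$ on a horizontal cut is $m$-dark if it lies in the $m$-span of the vertical cut through $p$. Since $m$-dark and $M$-region-dark portions are reservoirs of cost that can be used to ``pay for'' the spans we must add to $E$, the goal is to show that on average the dark material dominates the span material.

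First I would set up the two orthogonal sweeps. Consider sweeping a vertical cut $l_x$ across $W$ as $x$ ranges over the horizontal extent of $W$, and independently a horizontal cut $l_y$ as $y$ ranges over the vertical extent. For each position I record four quantities: the length $|\sigma_m(l_x)|$ of the $m$-span, the length $|\Sigma_M(l_x)|$ of the $M$-region-span, and the total lengths of the $m$-dark and $M$-region-dark portions of that cut, and analogously for horizontal cuts. The plan is to integrate each of these quantities over the sweep parameter and compare the integrals. The crucial Fubini-type identity is that integrating the length of the $m$-span of vertical cuts over $x$ equals integrating the measure of $m$-dark points on horizontal cuts over $y$ (and vice versa): both double integrals compute the same planar measure, namely the area swept by $m$-span segments, because a point lies in the $m$-dark portion of a horizontal cut exactly when the orthogonal vertical cut through it has that point inside its $m$-span. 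The same identity holds verbatim for the $M$-region-span and the $M$-region-dark set, using my revised Definition~\ref{def:region-span}, which is precisely where the new definition ``behaves well.''

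Next I would combine these identities. Summing the vertical and horizontal span integrals gives a total, and by the Fubini identities this total equals the sum of the corresponding dark-portion integrals, across both orientations. Hence the integral over all cuts (both families) of the quantity
\[
(\text{$m$-dark} + \text{$M$-region-dark}) - (\text{$m$-span} + \text{$M$-region-span})
\]
is zero. A nonnegative-on-average quantity must be nonnegative somewhere, but here the average is exactly zero, so I need the sharper conclusion that the integrand is at least zero at \emph{some} cut position; since the integral of the difference over the combined sweep vanishes, there must exist a position at which the dark total is at least the span total -- that is, a favorable cut. (If the difference were strictly negative at every cut in both families, its integral could not be zero.)

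The main obstacle I anticipate is making the Fubini/double-counting identity fully rigorous for the $M$-region-span under the revised definition, specifically verifying that the endpoints $p_1,\dots,p_\xi$ counted in Definition~\ref{def:region-span} produce a span whose swept area matches the $M$-region-dark set exactly, with no discrepancy from boundary points, tangencies, or regions that $l$ enters and exits multiply. For non-convex polygonal regions the intersection $l \cap \mathcal{R}_{W_0}$ can have many components, so I must be careful that counting endpoints in $\mathrm{int}(W)$ (rather than ordering by region, as in Mitchell's flawed Definition~\ref{def:m-region-span}) yields a well-defined span and a clean measure-theoretic correspondence. I would handle degeneracies by a standard general-position perturbation argument and by choosing the sweep parameter to avoid the measure-zero set of positions where $l$ is tangent to a region boundary or passes through a vertex, so that for almost every cut the count $\xi$ and the span endpoints vary measurably and the integrals are well defined.
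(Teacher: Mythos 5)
Your proposal is correct and matches the paper's argument: the paper also proves this by changing the order of integration (the span of vertical cuts swept over $x$ and the dark portions of horizontal cuts swept over $y$ both measure the same planar area, and vice versa) and then averaging to extract a single cut. Your only deviation is cosmetic --- you integrate the difference over both cut families combined and note the total is zero, whereas the paper picks \Wlog{} the orientation whose span-integral is dominated by its dark-integral and finds the favorable cut within that one family; these are equivalent, and your closing care about measurability and degenerate cut positions is a reasonable (if unstated in the paper) tightening.
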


Now, given an edge set $E$ as in the theorem, we recursively find a
favorable cut $l$, add the $m$-span and $M$-region-span to $E$ and
proceed with the two subwindows, into which $l$ splits the current
window.

This procedure terminates, because the proof of
Lemma~\ref{lem:wfav-cut} yields that the cut can always be chosen to be
at one of finitely many candidate coordinates since we assume that
all vertices of the tour lie on a grid. 

As for the additional length induced by the $m$-spans and
$M$-region-spans, we know that it can be bounded by the length of the
respective dark portions of the cuts in question. Since
\cite{mitchell-ptas} omits some of the details, we will give them
here.

\begin{proof}[Theorem~\ref{thm:charge}]

  The charging scheme works as follows: Every edge and the boundary
  (in \Mitchell's version, diameter) of every region is split up into
  finitely many pieces, to each of which we assign a ``charge'' that
  specifies which multiple of the length of that segment was added to
  $E$ as part of $m$-spans and $M$-region-spans. If we can establish
  that the charge for every edge segment is at most
  $\frac{\sqrt{2}}{m}$, the charge for every region boundary is at
  most $\frac{\sqrt{2}}{M}$, and the $m$-span and $M$-region-span
  never get charged during the recursive process, we obtain the
  statement of Theorem~\ref{thm:charge}.

Let $l$ be a favorable cut. The charging scheme for the edge set is
described in \cite{mitchell-tsp}: For each $m$-dark portion of $l$,
the $2m$ inner edge segments (the $m$ segments closest to the cut on
each side) are each charged with $\frac{1}{m}$. In the recursive
procedure, each segment $e$ can be charged no more than once from each
of the four sides of its axis-parallel bounding box, since in order
for it to be charged, there have to be at least $m$ edges to the
corresponding side of it, but there are less than $m$ edges between
$e$ and any cut that charges it. Therefore, after placing a cut and
charging $e$ from one side, there will be less than $m$ edges to the
respective side of $e$ in the new subwindow, preventing it from being
charged from that direction again.

Thus, each side of the axis-parallel bounding box of the segment gets
charged $\frac{1}{2m}$ times, and since the perimeter of the bounding
box is at most $2 \sqrt{2}$ times the length of the edge segment $e$,
it gets charged no more than $\frac{\sqrt{2}}{m}$ times in total.

The $m$-span and $M$-region span never get charged, because after
they are inserted, they are not in the interior of any of the windows
which are considered afterwards. 

With Definition~\ref{def:region-span} of the $M$-region-span, it is
possible to replace the regions by their boundaries (which form a
polygonal edge set of total length $\Lambda(\mathcal{R}_{W_0})$) and
to treat them the same way as the edge set $E$ (in particular, the
$M$-region-span and $M$-region-dark parts become $M$-span and
$M$-dark).
\end{proof}

\begin{figure}[hbt] \center
  \includegraphics[width=6cm]{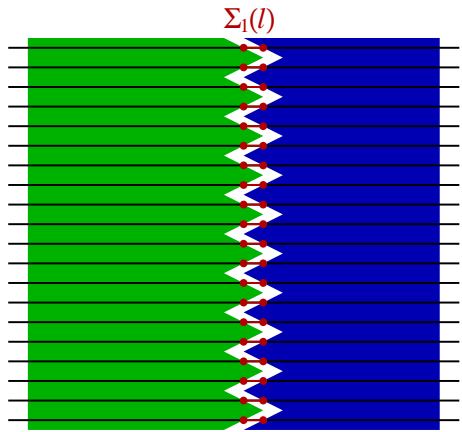}
  \caption{Charge is proportional to perimeter}
  \label{pic:charging-problems}
\end{figure}

For \Mitchell's original definition
(Definition~\ref{def:m-region-span}) of the $M$-region-span, a
scenario as in Figure~\ref{pic:charging-problems} can become a
problem. Every black line pictured is a favorable cut, every red line
segment is $1$-region-dark on the respective cut (even if the window
in question has been cut by the black line directly below and above
already). The total length of the red line segments is however
proportional to the perimeter, not the diameter, of the blue and green
regions.

Within one window, \Mitchell's statement holds; the problem with his
definition is its lack of a monotone additive behavior: When cutting a
window $W$ into two parts, the sum of the diameters of all relevant
regions is $W$ might be less than the sums of the diameters of the
relevant regions for each subproblem combined, and while no part of
the diameter of a region is charged more than $\frac{\sqrt{2}}{M}$
times in each subproblem, the combined charge might still be greater
than $\frac{\sqrt{2}}{M}$.

\subsection{Grids and guillotines} \label{sec:grid}

In order for the dynamic programming algorithm to work, the number of
possible endpoints for an edge has to be restricted (for example, to a
grid). In \cite{mitchell-ptas}, an optimal solution will thus first be
moved to a fine grid through slight perturbation, and subsequently
transformed into an $(m, M)$-guillotine subdivision. Mitchell claims
that there is always a favorable cut that has grid coordinates,
arguing that in the charging lemma (Lemma~\ref{lem:wfav-cut}), the
functions considered are piecewise linear between grid points,
therefore the maximum of such a function must be attained at a grid
point. The proof fails to take into account that the function might be
discontinuous (and not even semi-continuous) at grid points.

Even if this were true, it is not in general true in the Euclidean
case (unlike the rectilinear case) that the $m$-span ends at a grid
coordinate on the cut (for example, it could instead end at an
interior point of an edge).

A (slightly technical) solution to this uses a weaker version of this
claim, i. e. that a favorable cut has to be at a grid coordinate or
the mean value of two consecutive grid points, which follows from the
simple observation that when integrating affine functions over an
interval, the sign of the integral is the same as the sign of the
function value at the midpoint of the interval.

This can be used in the proof of Lemma~\ref{lem:fav-cut} as follows:
The existence of a favorable cut is shown via changing the order of
integration -- then the integral of the length of the dark portions
along the $x$-axis is the same as the integral of the length of the
spans along the $y$-axis, and vice versa.

Therefore, there is one axis, such that there is more dark than
spanned area in that direction, i. e. the total area of all dark
points with respect to some horizontal (resp. vertical) cut is greater
than the area of all points that are contained in some horizontal
(resp. vertical) cut. Thus, there has to be a single cut with that
property as well: a favorable cut.

If all previous edges and regions are restricted to the grid, the
aforementioned observations imply that in particular, there is a
favorable cut at a grid coordinate or in the center between two
consecutive ones. 

It can then be shown that a non-empty $m$-span in an optimal solution
always has a certain minimum length, or it contains a grid point. This
observation allows us to slightly modify the edge set, so that a
cut becomes $m$-good, while all edges still have grid endpoints.

The $M$-region-span can be dealt with in a similar way. Moving it to
the grid requires a slight change in the definition of the $(m,
M)$-guillotine property, which will preserve its algorithmic
properties. However, there are further problems with the
$M$-region-span, which will be explained in the following section.

\subsection{Connectivity} \label{sec:cnn} 

To transform an optimal tour into an $(m, M)$-guillotine subdivision,
the $m$-span and $M$-region-span of a favorable cut are inserted into
the edge set through a recursive procedure. The $m$-span is always
connected to the original edge set $E$, since its endpoints are
intersection points of the cut with $E$. This is not true for the
$M$-region-span, and in fact, it can be ``far away'' from $E$, as seen
in Figure~\ref{pic:region-span}. The optimal tour (green) and the
$1$-region-span $\Sigma_1(l)$ of the favorable cut $l$ (which is
favorable, because the two grey squares at the top make a portion of
$l$ with the same length as $\Sigma_1(l)$ $1$-region-dark) are far
away from each other. Connecting it to the tour does not preserve the
approximation ratio of $1+\vare$. 

\begin{figure}[hbt]
  \centering
  \includegraphics[width=3cm]{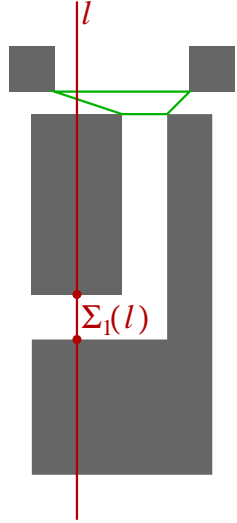}
  \caption{Favorable cut and disconnected $1$-region-span}
  \label{pic:region-span}
\end{figure}

Note that, in the dynamic programming algorithm, we cannot afford to
decide whether to connect the $M$-region-span of a cut to the edge
set: If we choose not to connect it, we have to decide which
subproblems is responsible for each region on the $M$-region-span, but
this is exactly what was to be avoided by introducing it in the first
place. 

On the other hand, if we do connect the $M$-region-span to the edge
set, both its length and its possible interference with other
subproblems have to be taken care of. With the second definition of
$\alpha$-fatness (Definition~\ref{def:afat-elb}), which implies the
lower bounds mentioned in Section~\ref{sec:ext}, the length of a
segment connecting the $M$-region-span to $E$ could be charged off to
the length of the $M$-region-span itself, whereas Mitchell's
definition of $\alpha$-fatness does not even guarantee this (because
in the proof of the lower bound, we relied on the regions being
contained in the bounding box of the tour). It is not clear whether
the connection of tour and region-span intersects another subproblem,
possibly violating the $(m, M)$-guillotine property there, therefore
even for $\alpha$-fat$_E$ regions, this problem remains open.

\subsection{External regions} \label{sec:ext}

In addition to dealing with the internal regions $\mathcal{R}_{W_0}$,
the dynamic programming algorithm has to determine how to visit external
regions. \Mitchell's strategy is to enumerate all possible options,
restricting the complexity with the following argument: Given a
situation as in Figure~\ref{pic:sub}, with some external regions
protruding from the outside into a subproblem $W$, we know that since
there are only $\mathcal{O}(m)$ edges on each side of $W$, they can be
split into $\mathcal{O}(m)$ intervals of regions, such that along the
corresponding side of $W$, the regions are consecutive with no edges
passing through $\partial W$ between them (e. g. the red, green and
blue region in Figure~\ref{pic:sub}). 

It seems clear now that, in order for the green region to be visited
by an edge outside of $W$, either the red or the blue region would
have to be crossed as well. If this were true, it could be deduced
that the set of regions in one of the intervals in question that are
not visited outside $W$, and that thus $W$ is responsible for, is a
connected subinterval, leading to $\mathcal{O}(n^2)$ possibilities for
each interval and $\mathcal{O}(n)^{\mathcal{O}(m)}$ possibilities
overall for each window $W$. 

This argument fails if the green region has a disconnected
intersection with $W_0$. An example is given on the left in Figure~\ref{pic:ext}:
An $(m, M)$-guillotine tour can visit any subset of the regions
outside of $W$, thus there is no polynomial upper bound
on the number of possibilities anymore.

Mitchell's argument holds for convex regions, but as seen left in
Figure~\ref{pic:ext}, in general it does not apply to disjoint,
connected, $\alpha$-fat regions. The number of regions such that their
intersection with $W_0$ (or even a slightly extended rectangle) is
disconnected could be $\Theta (k)$, for example if the construction in
Figure~\ref{pic:ext} is extended beyond the yellow region, which is
possible, because the regions here actually become ``more fat'' as
their size increases, i.e. $\alpha$ decreases and eventually converges
to 2.

\begin{figure}[hbt]
  \centering
  \includegraphics[width=3cm]{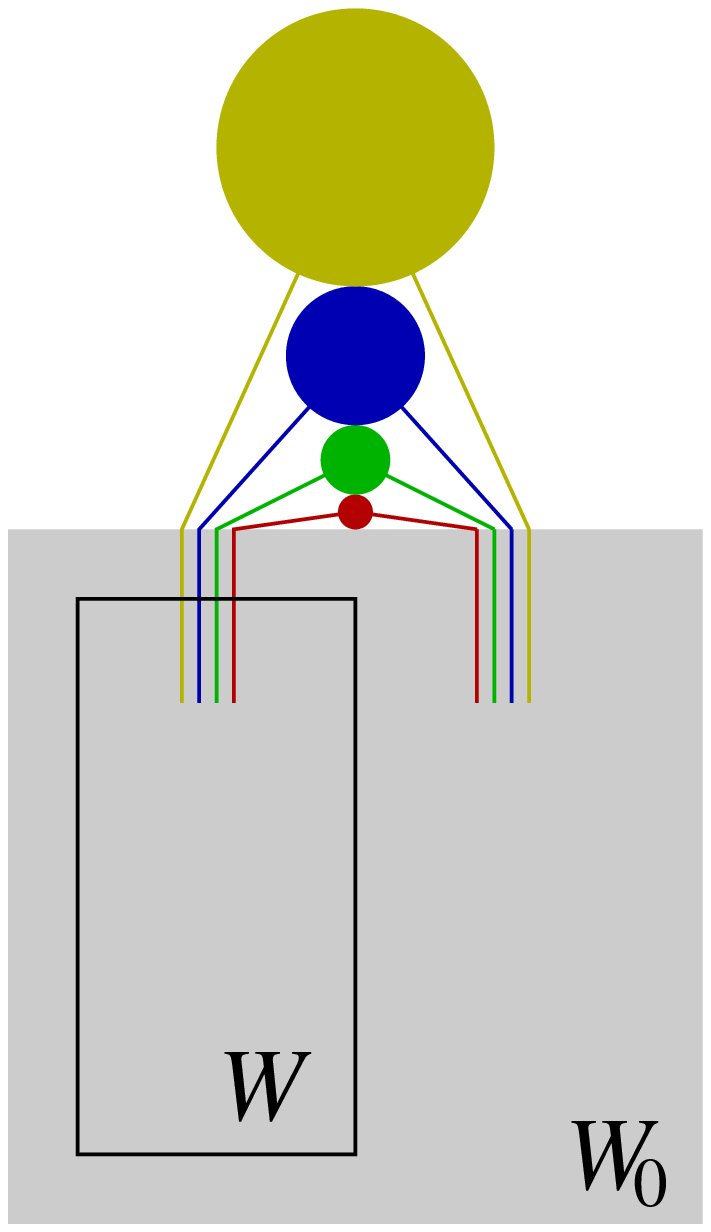}
  \hspace*{0.3cm}
  \includegraphics[width=5cm]{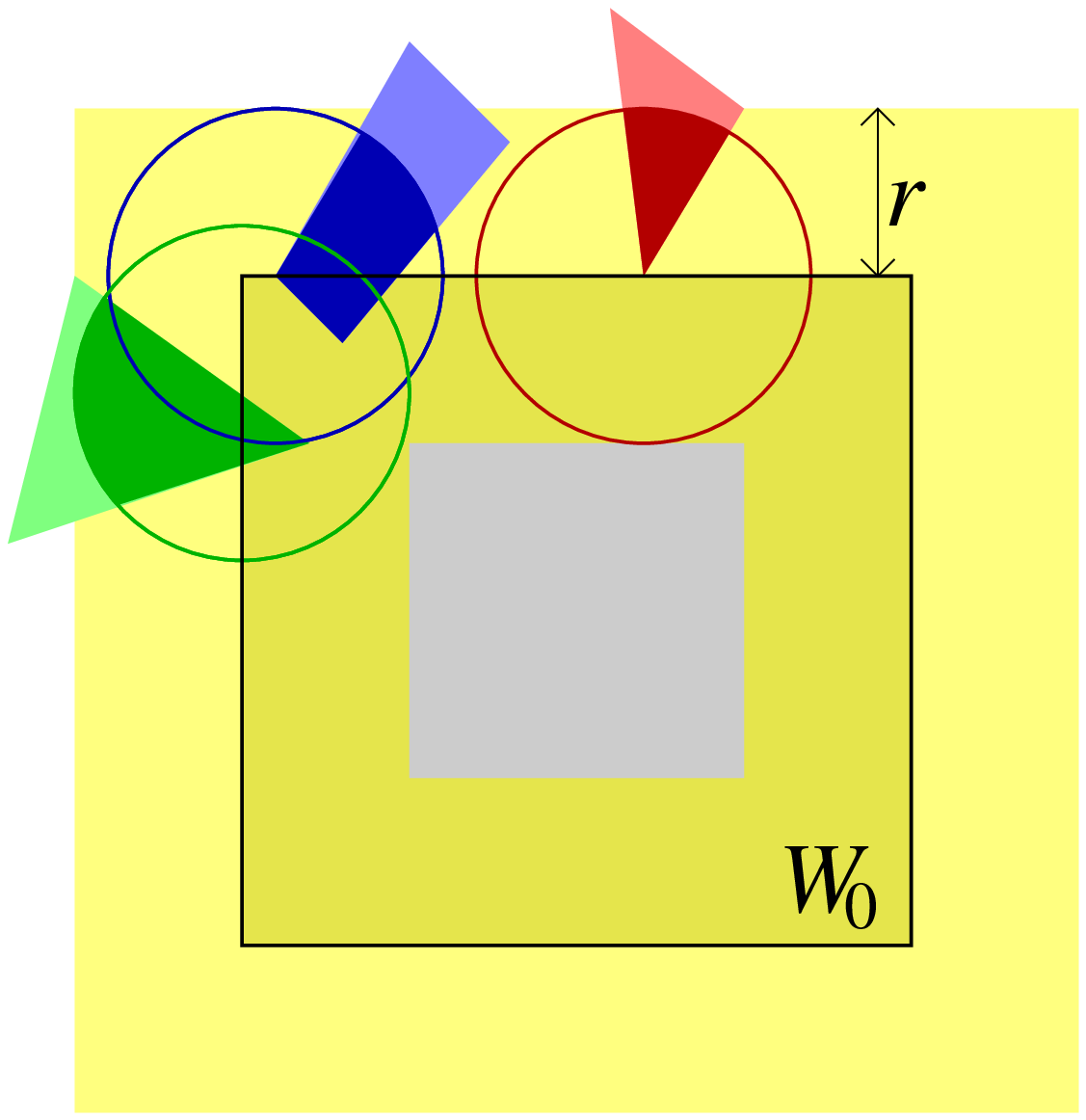}
  \caption{External regions}
  \label{pic:ext}
\end{figure}

It can be shown that in order for this to be a problem, the size of
the regions has to increase exponentially due to the logarithmic lower
bound in the packing lemma, and the fact that the boundary of $W_0$ is
a tour of the external regions. One solution is therefore to restrict
the diameter of the regions; many of the approximation algorithms for
similar problems do in fact require the regions to have comparable
diameter (\cite{elb2}, \cite{dumi}).

Alternatively, requiring convexity solves the issue, but is quite a
strong condition. Another option is using the notion of
$\alpha$-fatness$_{E}$ from Definition~\ref{def:afat-elb} as
established by K. Elbassioni, A. Fishkin, N. Mustafa and R. Sitters
\cite{elb}.

This definition implies that a path connecting $k$ regions has a
length of at least $(\frac{k}{\alpha}-1) \frac{\pi \delta}{4}$, where
$\delta$ is the diameter of the smallest region \cite{elb}; adding a
variant of this up by diameter types yields the same lower bound as
for \Mitchell's definition of $\alpha$-fatness, up to a constant
factor.

Unlike \Mitchell's version, this definition however estimates the
length of a tour in terms of the minimum diameter of the regions
involved and can therefore be used to give a constant upper bound on
the number of large external regions (see Figure~\ref{pic:ext}, on the
right): since $\partial W_0$ is a path connecting them, the number of
external regions with diameter $\geq \diam(W_0)$ is at most
$\alpha(\frac{8\sqrt{2}}{\pi}+1)$. 

In both cases, the small external regions can be added to
$\mathcal{R}_{W_0}$, since $\partial W_0$ is a tour of them, which is
at least $\frac{1}{\sqrt{2}}$ times the length of an optimal tour, and
thus these regions provide a lower bound of the length of $\partial
W_0$, which in turn provides a lower bound on $L^*$. This way, the
statement of Lemma~\ref{lem:afat-mit} (for which the fact that $W_0$
is the bounding box of the tour was exploited in the proof) remains
intact with modified constants. For the large regions, we can afford
to explicitly enumerate which subwindow should visit them. 

\subsection{Result}
Overall, we have the following result:
\begin{theorem} \label{thm:main} Let $\vare > 0$ be fixed. Given a set
  of $k$ disjoint, connected, polygonal regions that are
  $\alpha$-fat$_E$, convex or $\alpha$-fat and of bounded diameter,
  with a total of $n$ vertices in the plane, we can find a connected,
  $(m, M)$-guillotine, Eulerian grid-rounded graph visiting all
  regions in polynomial time in the size of the grid, $n$, $k$, $2^M$
  and $(nm)^m$. Among all such graphs, it will be shortest possible up
  to a factor of $1 + \vare$.
\end{theorem}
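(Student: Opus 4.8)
The plan is to combine the localization, guillotine, grid and external-region machinery of the previous sections into one dynamic program, and then bound its running time and approximation ratio. First I would apply Lemma~\ref{lem:loc} to produce $\mathcal{O}(n)$ candidate rectangles of size $\mathcal{O}(L^*)$, run the procedure below on each, and return the shortest graph obtained; since $L^*$ is determined up to a constant factor, this enumeration is polynomial. On each rectangle I would lay down the polynomial-size grid of statement~(B) so that every vertex of every admissible graph is a grid point, fix $m,M = \mathcal{O}(\frac{1}{\vare}\log\frac{k}{\vare})$ as in Lemma~\ref{lem:afat-mit}, and preprocess the regions as in Section~\ref{sec:ext}: small external regions are folded into $\mathcal{R}_{W_0}$ (legitimate because $\partial W_0$ is a tour of them and hence a lower bound on $L^*$), while the $\mathcal{O}(1)$ large external regions are handled by explicit enumeration of the subwindow responsible for each.

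The core is the window dynamic program. A subproblem is indexed by a grid-aligned window $W\subseteq W_0$ together with its boundary data: the $\mathcal{O}(m)$ grid positions at which edges cross $\partial W$, a non-crossing connection pattern on these crossings recording which are joined by the part of the graph inside $W$ (this pattern is what encodes both the Eulerian parity constraints and global connectivity), and an assignment of internal and external regions to ``inside'' or ``outside''. For each subproblem I would enumerate every grid-aligned cut $l$ of $W$, and for each cut every admissible configuration of its $m$-span and its grid-rounded $M$-region-span; a bottom-up recursion supplies optimal solutions for the two resulting windows $W'$ and $W''$ under all boundary data, and I select the pair of minimum combined length whose crossings, connection patterns and region assignments are mutually consistent, storing the best value over all cuts. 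Correctness is exactly the assertion that the recursion of the DP mirrors the recursion in the definition of the $(m,M)$-guillotine property: each admissible graph admits such a cut at every level, so its restrictions to $W'$ and $W''$ appear as solved subproblems, and conversely every graph assembled by the DP is connected, Eulerian, grid-rounded and $(m,M)$-guillotine by construction and visits all regions by the assignment bookkeeping, controlled via $\Lambda(\mathcal{R}_{W_0})$ as in Theorem~\ref{thm:charge}.

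For the running time I would multiply the $\mathcal{O}(n)$ rectangles by the polynomial number of grid-aligned windows, the $(nm)^{\mathcal{O}(m)}$ choices of crossing positions with their connection patterns, and the $2^{\mathcal{O}(M)}$ choices of region subsets carried across a cut by the $M$-region-span; enumerating cuts and span configurations and combining two subproblem tables costs a further polynomial factor, which gives the stated bound. The factor $1+\vare$ does not come from the DP search itself, which is exact over the admissible class for a fixed grid, but from the two approximations baked into that class: the grid perturbation of statement~(B), and the slight modification of the $(m,M)$-guillotine definition required to place the $M$-region-span on the grid (Section~\ref{sec:grid}). Choosing $m$ and $M$ as above and invoking the lower bound of Lemma~\ref{lem:afat-mit} makes each of these change the length of any admissible graph by at most a $1+\mathcal{O}(\vare)$ factor, which can be absorbed after rescaling $\vare$.

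The delicate point, and the reason the statement is confined to optimality \emph{within} this class rather than relative to the true TSPN optimum, is connectivity (Section~\ref{sec:cnn}). Because the $M$-region-span inserted at a cut may lie far from the rest of the edge set, the DP cannot certify connectivity by a structural argument the way it certifies the guillotine property; instead connectivity must be enforced through the boundary data, so the connection pattern has to carry enough information to track, across every cut and through every inserted region-span, which crossings lie in a common component of the assembled graph. The crux is to show that a connection pattern of size $\mathcal{O}(m)$ suffices to enforce global connectivity while keeping the number of subproblems at $(nm)^{\mathcal{O}(m)}\cdot 2^{\mathcal{O}(M)}$; this is exactly the obstacle that blocks a genuine PTAS, and here it is sidestepped by optimizing only over graphs that are already required to be connected.
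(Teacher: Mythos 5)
Your proposal follows essentially the same route as the paper, which never writes out a standalone proof of Theorem~\ref{thm:main}: the theorem summarizes the dynamic program sketched in the subsection on Mitchell's algorithm together with the repairs of Sections~\ref{sec:loc}, \ref{sec:grid} and \ref{sec:ext}, and your assembly --- localization via Lemma~\ref{lem:loc}, grid-rounding, window subproblems indexed by boundary crossings, connection patterns and region assignments, cuts with $m$-span and grid-rounded $M$-region-span configurations, external regions split into small ones folded into $\mathcal{R}_{W_0}$ (justified by $\partial W_0$ being a tour of them) and constantly many large ones handled by enumeration, and the within-class reading of the $1+\vare$ guarantee --- is exactly that.

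Two corrections are in order. First, you fix $m, M = \mathcal{O}(\frac{1}{\vare}\log\frac{k}{\vare})$, whereas the paper (following Mitchell) takes $M = \mathcal{O}(\frac{1}{\vare}\log\frac{n}{\vare})$ but $m = \mathcal{O}(\frac{1}{\vare})$; this is not cosmetic, since with $m = \Theta(\log k)$ the factor $(nm)^m$ is quasi-polynomial in $k$, so although your running time is formally ``polynomial in $(nm)^m$'' as the statement allows, the intended polynomial-time algorithm needs $m$ constant for fixed $\vare$. Relatedly, no appeal to Lemma~\ref{lem:afat-mit} (or to the charging bound $\Lambda(\mathcal{R}_{W_0})$ of Theorem~\ref{thm:charge}) is needed for this theorem at all: those enter only when comparing against the true TSPN optimum, which the theorem deliberately does not do. Second, your closing paragraph misidentifies the crux: showing that connection patterns on the $\mathcal{O}(m)$ boundary crossings suffice to certify connectivity and parity is routine bookkeeping inherited from \cite{mitchell-tsp} and poses no difficulty for the dynamic program. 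What blocks the genuine PTAS is the structural question of Section~\ref{sec:cnn} --- whether the class of connected $(m,M)$-guillotine grid-rounded graphs even \emph{contains} a $(1+\vare)$-approximation of an optimal tour, since the $M$-region-span may lie far from the tour (Figure~\ref{pic:region-span}) and connecting it may destroy the guillotine property of other windows. You do draw the correct conclusion, that the theorem sidesteps this by optimizing only within the class, so the misattribution does not invalidate your proof of the statement as written; but the obstacle you isolate is not the actual obstacle.
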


Here, grid-rounded means that all edge endpoints are on a grid of
polynomial size, and that there is only a polynomial number of
possible positions for every cut. 

Mitchell claims that for $M = \mathcal{O}(\frac{1}{\vare} \log
\frac{n}{\vare})$ and $m = \mathcal{O}(\frac{1}{\vare})$ and
$\alpha$-fat regions, a connected $(m,M)$-guillotine
graph is a $(1+\vare)$-approximation of a tour; his proofs only apply
to not necessarily connected graphs and regions with bounded
perimeter-to-diameter ratio. In general, because of the connectivity
problem in Section~\ref{sec:cnn} and some technical difficulties
choosing an appropriate grid, it is not clear whether there is a grid
of polynomial size, such that a graph with the properties of the
theorem is a $(1+\vare)$-approximation of an optimal TSPN tour. For
unit disks and with a slightly modified definition of the guillotine
property, there are $m$ and $M$ such that the guillotine subdivision
is only by a factor of $(1+ \mathcal{O}(\vare))$ longer than a tour;
this will be shown in Theorem~\ref{thm:grid}.

\section{Unit Disks}
The criticism put forward in Section~\ref{sec:cnn} extends to a joint
paper of {\Mitchell} and Dumitrescu \cite{dumi}. Despite being
published earlier than the PTAS candidate, the approach chosen there
actually takes into account that the $M$-region-span (or $m$-disk-span
in the notation of the paper) has to be connected to the edge set, and
this is done at a sufficiently low cost. However, no proof is given
that the edges added during this process preserve the $(m,
M)$-guillotine property. In particular, even if a subproblem contains
disks that do not intersect its boundary, the $M$-region-span might
not visit one of them; if it always did, then we could add the
connecting edge and guarantee that it remains within the same
subproblem. 

With some additional effort, this problem can be avoided, as we will
show now. 

\subsection{Preliminaries}
Definition~\ref{def:afat-elb} yields a useful lower bound:

\begin{lemma}[\cite{elb}] \label{lem:afat} A shortest path connecting
  $k$ disjoint, $\alpha$-fat\E regions of diameter $\geq \delta$ has
  length at least $(\frac{k}{\alpha}-1) \frac{\pi \delta}{4}$.
\end{lemma}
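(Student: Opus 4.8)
The plan is to use an area-packing argument that plays the total area captured by the regions near the path against the area of a tubular neighbourhood of the path. Let $\pi$ denote the connecting path, of length $L$, and for each region $P_i$ choose a point $q_i \in P_i \cap \pi$, which exists because $\pi$ visits $P_i$. Around each $q_i$ I would place the disk $\Theta_i$ of radius $\frac{\delta}{2}$ centred at $q_i$. Since $q_i \in P_i$, the centre of $\Theta_i$ lies in $P_i$; and since $\diam(P_i) \geq \delta$ equals the diameter of $\Theta_i$, the disk $\Theta_i$ does not fully contain $P_i$. Hence $\Theta_i$ satisfies both hypotheses of Definition~\ref{def:afat-elb}.

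By $\alpha$-fat$_E$-ness we then get $\mathrm{area}(P_i \cap \Theta_i) \geq \frac{1}{\alpha}\,\mathrm{area}(\Theta_i) = \frac{\pi\delta^2}{4\alpha}$. Because the regions are pairwise disjoint, the sets $P_i \cap \Theta_i$ are pairwise disjoint as well, so
$$\mathrm{area}\Big(\bigcup_{i=1}^k (P_i \cap \Theta_i)\Big) \;\geq\; k\,\frac{\pi\delta^2}{4\alpha}.$$
On the other hand, each $P_i \cap \Theta_i$ is contained in $\Theta_i$ and therefore lies within distance $\frac{\delta}{2}$ of the point $q_i \in \pi$; hence the whole union is contained in the $\frac{\delta}{2}$-neighbourhood (Minkowski sausage) of $\pi$. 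The area of the $r$-neighbourhood of a rectifiable curve of length $L$ is at most $2rL + \pi r^2$, so with $r = \frac{\delta}{2}$ the union has area at most $\delta L + \frac{\pi\delta^2}{4}$.

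Combining the two estimates gives $k\,\frac{\pi\delta^2}{4\alpha} \leq \delta L + \frac{\pi\delta^2}{4}$, and solving for $L$ yields exactly $L \geq \big(\frac{k}{\alpha}-1\big)\frac{\pi\delta}{4}$. The only point that requires care is the degenerate boundary case $\diam(P_i)=\delta$, in which $\Theta_i$ could in principle contain $P_i$: this would force the two diametral points of $P_i$ to be antipodal on $\partial\Theta_i$ with $q_i$ as their midpoint, a situation that is either ruled out or disposed of by shrinking the radius to $\frac{\delta}{2}-\eta$ and letting $\eta \to 0$. I expect this, together with confirming the sausage-area bound $2rL+\pi r^2$ for a possibly self-intersecting path, to be the main (though minor) technical obstacle; the heart of the argument is simply that fatness converts the statement ``$\pi$ touches each region'' into a fixed quantum of area per region that must fit inside a neighbourhood whose area is linear in $L$.
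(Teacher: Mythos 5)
Your proof is correct and is essentially the argument of Elbassioni et al.~\cite{elb}, which the paper itself cites for this lemma without reproducing a proof: $\alpha$-fatness$_E$ (Definition~\ref{def:afat-elb}) converts each visited region into an area quantum of $\frac{\pi\delta^2}{4\alpha}$ packed disjointly inside the $\frac{\delta}{2}$-neighbourhood of the path, whose area is at most $\delta L + \frac{\pi\delta^2}{4}$. The two technicalities you flag are indeed minor and resolve as you expect: the boundary case $\diam(P_i)=\delta$ is handled by shrinking the radius to $\frac{\delta}{2}-\eta$ and letting $\eta\to 0$, and the sausage bound $2rL+\pi r^2$ holds for arbitrary (self-intersecting) rectifiable paths by induction on the segments of a polygonal approximation, since each appended segment's stadium overlaps the existing neighbourhood in at least the disk of radius $r$ about the shared vertex.
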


All results apply not only to unit disks (for which $\alpha = 4$), but
to disk-like regions:

\begin{definition}
  A set of regions is \textbf{disk-like}, if all regions are disjoint
  and connected, and have comparable size (their diameters range
  between $d_1$ and $d_2$, which are constant), are $\alpha$-fat\E or
  $\alpha$-fat for constant $\alpha$, and their perimeter-to-diameter
  ratio is bounded by a constant $r$.
\end{definition}

\subsection{Charging Scheme} \label{sec:charge}

Let $\mathcal{D}$ denote the input set of $k$ disjoint unit
disks. 

Throughout the rest of this paper, we will use a slightly modified
version of the $(m, M)$-guillotine property:

\begin{definition}[{\cite{mitchell-ptas}}] 
  An edge set $E$ of a polygonal planar embedded graph satisfies the
  \textbf{$\boldsymbol{(m, M)}$-guillotine property} with respect to a window $W$
  and regions $\mathcal{R}_{W_0}$, if one of the following conditions
  holds:
  \begin{enumerate}
  \item There is no edge in $E$ with its interior (i. e. the edge
    without its endpoints) completely contained in the interior of
    $W$.
  \item There is a cut $l$ of $W$ that is $m$-good with respect to $W$
    and $E$ and $M$-region-good with respect to $W$,
    $\mathcal{R}_{W_0}$ and $E$, such that $l$ splits $W$ into two
    windows $W'$ and $W''$, for which $E$ recursively satisfies the
    $(m, M)$-guillotine property with respect to $W'$ resp. $W''$ and
    $\mathcal{R}_{W_0}$.
  \end{enumerate}
\end{definition}
  
The first case differs from Mitchell's definition, which only requires
that no entire edge lies completely in the interior of $W$. However,
with that definition, adding the $m$-span to the edge set does not
reduce the complexity of the subproblem (possibilities for edge
configurations on the boundary of the window), because then we would
still have to know the positions of the edges that intersect the
$m$-span.

\begin{definition}
  Let $m$ and $M$ be fixed. Then a cut is called
  \textbf{$\boldsymbol{c}$-favorable}, if the sum of the lengths of
  its $m$-span and $M$-region-span is at most $c$ times the sum of the
  lengths of its $m$-dark and $M$-region-dark portions.

  A cut is \textbf{weakly $\boldsymbol{c}$-favorable}, if the sum of
  the lengths of its $m$-span and $M$-region-span is at most $c$ times
  the sum of the lengths of its $m$-dark and $M/2$-region-dark
  portions.
\end{definition}

For a cut $l$, define the following notation: 
\begin{itemize}
\item $\sigma_m(l)$ -- length of the $m$-span
\item $\Sigma_M(l)$ -- length of the $M$-region-span 
\item $\delta_m(l)$ -- length of the $m$-dark segments
\item $\Delta_M(l)$ -- length of the $M$-region-dark segments
\end{itemize}

\begin{definition}
Let $m$ and $M$ be fixed. A cut is called 
\textbf{weakly central}, if it is horizontal and has distance at
least $\min\{2, h/4\}$ from the top and bottom edge of the window or it
is vertical and has distance at least $\min\{2, w/4\}$ from the left and
right edge of the window, where $w$ and $h$ denote the width and
height of the window, respectively. 

It is \textbf{perfect}, if it is
weakly $8$-favorable and at least one of the following holds:
\begin{itemize}
\item it is \textbf{central}, i. e. if it is a horizontal cut, it has distance
at least 2 from its top and bottom edge; if it is a vertical cut, it
has distance at least 2 from the left and right edge, \textit{or}
\item it is weakly central and its $M$-region-span is empty.

\end{itemize}
\end{definition}

Any central cut is weakly central, any $c$-favorable cut is weakly
$c$-favorable. 

\begin{lemma} \label{lem:wfav-cut}
Let $m$ and $M\geq 24$ be fixed. Every window has a perfect cut. 
\end{lemma}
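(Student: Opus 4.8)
The plan is to prove the existence of a perfect cut by a measure-theoretic averaging argument in the style of Mitchell's original favorable-cut lemma (Lemma~\ref{lem:fav-cut}), but carefully tracking the ``weakly central'' and centrality conditions so that the stronger conclusion (weak $8$-favorability together with a centrality guarantee) can be extracted. The core identity I would rely on is the exchange-of-integration principle already sketched after Lemma~\ref{lem:fav-cut}: integrating the length of the $m$-dark portion over all horizontal cuts equals the area swept by the vertical $m$-spans, and symmetrically for the $M$-region-dark portions and vertical $M$-region-spans. The same holds with the roles of the axes reversed. Summing the two directions, the total ``dark area'' (counting both $m$-dark and $M$-region-dark points) equals the total ``span area'', so in at least one of the two axis directions the integrated dark length is at least the integrated span length.

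\medskip

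First I would restrict attention to the favorable axis and work with the one-dimensional functions $x \mapsto \delta_m(l_x) + \Delta_M(l_x)$ (dark) and $x \mapsto \sigma_m(l_x) + \Sigma_M(l_x)$ (span), where $l_x$ is the cut at coordinate $x$. From the averaging identity the integral of the dark function dominates the integral of the span function over the full window. The next step is to confine the cut to the \emph{weakly central} band, i.e. to exclude the two outer strips of width $\min\{2,w/4\}$ (or $\min\{2,h/4\}$) near each edge. Here I must argue that removing these strips cannot destroy the inequality by more than the slack needed for the factor $8$: the span functions are bounded by the window's extent, and the excluded strips contribute only a bounded fraction of the total, so on the weakly central band the integrated dark length is still at least $\frac{1}{8}$ of the integrated span length. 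By the mean-value/pigeonhole step this yields a single weakly central cut $l$ with $\sigma_m(l) + \Sigma_M(l) \leq 8\bigl(\delta_m(l) + \Delta_M(l)\bigr)$, i.e. a weakly $8$-favorable weakly central cut. The hypothesis $M \geq 24$ enters here: it provides enough ``budget'' of region-crossings that passing from the $M$-region-dark to the $M/2$-region-dark count (as in the definition of \emph{weakly} favorable) only loses a bounded factor, which is what lets the constant close at $8$.

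\medskip

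The remaining work is to upgrade ``weakly central'' to the disjunction required by \emph{perfect}: either the chosen cut is actually central (distance $\geq 2$ from both parallel edges), or its $M$-region-span is empty. The plan is a case distinction on the window's dimensions. If the window is ``fat'' in the favorable direction (width or height at least $8$, say), then $\min\{2,\cdot/4\}=2$ and every weakly central cut is automatically central, so we are done. If the window is thin in that direction, the cut may fail to be central, and I must instead force the $M$-region-span to be empty. The idea is that a thin window cannot contain $2M-1$ internal regions piercing the cut: the regions are disjoint and of comparable size (diameters between $d_1$ and $d_2$), so by the $\alpha$-fat\E packing bound of Lemma~\ref{lem:afat} only $\mathcal{O}(1)$ such regions fit across a strip of bounded width, and since $M \geq 24$ this count stays below the $2M-1$ threshold in Definition~\ref{def:region-span}, forcing $\Sigma_M(l)=\emptyset$. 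I would make this explicit by comparing the number of disjoint regions that can meet a thin window against $M$, using $\diam \geq d_1$ to bound the count from above.

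\medskip

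The main obstacle I anticipate is the \emph{joint} bookkeeping of the two span types under the centrality restriction: the averaging identity is clean for each of $m$-span/$m$-dark and $M$-region-span/$M$-region-dark separately, but the definition of (weakly) $c$-favorable couples them additively, and excluding the non-central strips can reduce the dark contribution of one type without symmetrically reducing its span. I expect the constant $8$ (rather than the $1$ of Lemma~\ref{lem:fav-cut}) to be exactly the price paid for (i) restricting to the weakly central band and (ii) relaxing $M$-region-dark to $M/2$-region-dark, and the bulk of the careful estimation will be verifying that these two relaxations together cost at most a factor $8$ under the hypothesis $M \geq 24$. The centrality-versus-empty-region-span dichotomy, by contrast, should follow cleanly once the thin-window packing bound is in place.
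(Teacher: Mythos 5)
Your opening averaging step (exchange of integration, then Markov's inequality to obtain a $2$-favorable cut, restricted to the weakly central band) matches the paper's first paragraph, and your fat-window case (width $\geq 8$ makes every weakly central cut central) is also the paper's. But your thin-window case contains a genuine gap that the rest of the plan cannot absorb. You claim that if the window is thin in the favorable direction, the chosen weakly central cut must have an empty $M$-region-span because ``only $\mathcal{O}(1)$ such regions fit across a strip of bounded width.'' This is false: the cut runs \emph{parallel} to the long dimension of the window, so its length is unbounded, and arbitrarily many disjoint unit disks can be stacked along it --- a vertical cut of height $h$ in a strip of width $a<8$ can cross $\Theta(h)$ disks. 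Lemma~\ref{lem:afat} bounds the number of regions met by a path of \emph{bounded length}, i.e.\ by a cut in the \emph{orthogonal} direction; it gives no bound at all for the cut you selected. So in the thin case your dichotomy ``central, or empty region-span'' cannot be closed as proposed.

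The paper handles exactly this case with an idea your proposal is missing: a pivot to the orthogonal direction. If the weakly central vertical cut $l_x$ in a window of width $a<8$ has a non-empty $M$-region-span, it crosses at least $M$ disks; each of these must also cross $l_{x-a/8}$ or $l_{x+a/8}$ (the interval has width $<2$), so at least $M/2$ of them cross, say, $l_{x-a/8}$, and every vertical line with $x$-coordinate in $(x-a/8,x)$ has $M$ intersection points with these disks. Choosing a \emph{horizontal} cut $l_y$ that splits these intersection points evenly yields $\Delta_{M/2}(l_y)\geq a/8$ while $\sigma_m(l_y)\leq a$, which is where the constant $8$ and the $M/2$ in ``weakly favorable'' actually come from --- not, as you suggest, from slack lost in the averaging or in restricting to the central band (the band has measure at least half the width, so Markov already lands a $2$-favorable cut there at no extra cost). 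The hypothesis $M\geq 24$ likewise enters elsewhere than you propose: since $l_y$ has length $<8$, the packing bound of Lemma~\ref{lem:afat} caps the number of disks it meets at $k<25$, i.e.\ at most $48$ intersection points, so $M\geq 24$ forces the $M$-region-span of $l_y$ to be empty; and the at least $12$ disks above and below $l_y$ along $l_x$ force $l_y$ to lie at distance greater than $2$ from the top and bottom of the window, making it weakly central. Without this orthogonal-cut construction, your argument produces no perfect cut in thin windows, which is precisely the case the lemma's ``weakly favorable'' and ``weakly central'' notions were designed for.
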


\begin{proof}
  Lemma 3.1 in \cite{mitchell-ptas} states that every window has a
  favorable cut. We can use the same techniques to show that almost
  every window has a central $2$-favorable cut: By definition, any
  point $p$ is in the $m$-span of a vertical cut, if and only if it is
  $m$-dark in a horizontal cut; analogously for regions. Therefore,
  $\int_x \sigma_m(l_x) + \Sigma_M(l_x) \dd x = \int_y \delta_m(l_y) +
  \Delta_M(l_y) \dd y$, where $l_x$ is the vertical cut with
  $x$-coordinate $x$ and $l_y$ is the horizontal cut with
  $y$-coordinate $y$; {\Wlog} $\int_x \sigma_m(l_x) + \Sigma_M(l_x)
  \dd x \leq \int_x \delta_m(l_x) + \Delta_M(l_x) \dd x$. Then there
  is a $1$-favorable vertical cut, i. e. an $x$ such that
  $\sigma_m(l_x) + \Sigma_M(l_x) \leq \delta_m(l_x) +
  \Delta_M(l_x)$. Using Markov's inequality, we can also conclude that
  at least half of the vertical cuts are $2$-favorable. Therefore, if
  the window has width $\geq 8$, we can choose a central $2$-favorable
  cut.

  If the window has width $a < 8$, then the same argument yields that
  there is still a $2$-favorable cut $l_x$ with distance at least
  $a/4$ from the left and right edge of the window. If its
  $M$-region-span is empty, it is a perfect cut. Otherwise, there are
  at least $2M$ intersection points with disks, i. e. at least $M$
  disks, each of which has to intersect this cut and thus have at
  least $a/4$ of its width within the window.

  Consider the interval $[x-a/8, x+a/8]$ of the window (and note that
  it has width $< 2$). Let $\mathcal{D}_x$ be the set of disks on
  $l_x$, then each of them must intersect $l_{x-a/8}$ or $l_{x+a/8}$
  (or both). Therefore, at least $M/2$ disks of $\mathcal{D}_x$
  intersect one of them, \Wlog, $l_{x-a/8}$. This means that every cut
  with $x$-coordinate in $(x-a/8, x)$ has a total of $M$ intersection
  points with these disks.  Let $y$ be the $y$-coordinate of the
  horizontal cut such that half of these intersection points are below
  and half of them above $l_y$. Then, the segment from $x-a/8$ to $x$
  on $l_y$ is $M/2$-region-dark.

  On the other hand, since $a < 8$, any horizontal cut can intersect
  at most $k < 4(\frac{16}{\pi} + 1) < 25$ disks (because
  Lemma~\ref{lem:afat} implies $a \geq (\frac{k}{4} - 1)
  \frac{\pi}{2}$). Therefore, there are at most $48$ total
  intersection points with disks on the cut. As $M \geq 24$, $l_y$ has
  an empty $M$-region-span.

  For $l_y$, we now have $\sigma_m(l_y) \leq a$, $\Sigma_M(l_y) = 0$,
  $\delta_m(l_y) \geq 0$, and $\Delta_{M/2}(l_y) \geq a/8$. This
  implies that $l_y$ is weakly $8$-favorable.

  Finally, since there are at least $12$ disks above and below $l_y$
  on $l_x$, $l_y$ has distance $4 \pi/4 > 2$ from the top and
  bottom of the window, so it is weakly central and therefore a
  perfect cut. 
\end{proof} 

\begin{theorem}[Connected guillotines] \label{thm:guillotines} Let $m
\geq 32$ and $M \geq 24$ be fixed, and let
$\mathcal{R}$ be a set of $k \geq 20$ unit disks. Let $L^*$ be the
length of a shortest tour with edge set $E^*$ connecting them and
$W_0$ its axis-parallel bounding box, then there exists a connected
Eulerian $(m,M+24)$-guillotine subdivision with edge set $E' \cup E^*$
of length $(1 + \mathcal{O}(1/m) + \mathcal{O}(1/M)) L^*$ connecting
all regions.
\end{theorem}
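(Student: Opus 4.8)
The plan is to build $E'$ by running the recursive favorable-cut procedure of Theorem~\ref{thm:charge}, but using the perfect cuts guaranteed by Lemma~\ref{lem:wfav-cut} instead of merely favorable ones, so that at every step the added $m$-span and $M$-region-span are both short (charged to dark portions) \emph{and} positioned so that they can be connected to the existing edge set without escaping the current window. First I would start from $E^*$ inside $W_0$ and, at each window $W$, invoke Lemma~\ref{lem:wfav-cut} to obtain a perfect cut $l$; since $m\geq 32$ and $M\geq 24$ satisfy the hypotheses, such a cut always exists. I add $\sigma_m(l)$ and $\Sigma_M(l)$ to the edge set. The $m$-span is automatically connected to $E^*$ because its endpoints $p_m,p_{\xi-m+1}$ are intersection points of $l$ with existing edges, exactly as in the standard TSP argument of \cite{mitchell-tsp}.

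The heart of the matter, and the reason for the modified guillotine definition and the $+24$ in the target $(m,M+24)$-guillotine subdivision, is connecting the $M$-region-span to $E^*\cup E'$ without destroying the approximation ratio — precisely the obstacle flagged in Section~\ref{sec:cnn}. The key observation I would exploit is that a perfect cut is either central with an arbitrary region-span, or weakly central with an \emph{empty} region-span. In the second case there is nothing to connect. In the first case, the endpoints $p_M$ and $p_{\xi-M+1}$ of $\Sigma_M(l)$ lie on disks of $\mathcal{R}$ that intersect $l$, and because these are unit disks, each such endpoint is within a bounded distance (at most the unit diameter) of a point on $l$ that is already spanned or already met by an edge crossing $l$. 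I would connect each endpoint of the region-span to $E^*$ by a short segment running along $l$ toward the nearest $m$-span endpoint (or toward the disk's intersection with an existing edge), of length $\mathcal{O}(1)$ per region-span, and absorb this into the $M$-span budget: by Lemma~\ref{lem:afat}, the disks forcing a nonempty region-span contribute $\Omega(M)$ to the lower bound on $L^*$ locally, so the $\mathcal{O}(1)$ connector per cut is charged to the $M$-region-dark length, changing only the constant in the $\mathcal{O}(1/M)$ term. Routing the connectors \emph{along the cut} is what keeps the new edges inside $W$ and guarantees they neither cross into a sibling subproblem nor spoil the $m$-goodness of $l$; the price is that these connectors add up to $24$ extra intersection points to later cuts, which is exactly why the final object is $(m,M+24)$-guillotine rather than $(m,M)$-guillotine.

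Next I would verify the length bound by the charging scheme of Theorem~\ref{thm:charge}: each edge segment is charged at most $\tfrac{\sqrt2}{m}$ and, since the disks have bounded perimeter-to-diameter ratio (they are disk-like), each region boundary is charged $\mathcal{O}(1/M)$, giving total added length $\mathcal{O}(1/m)L + \mathcal{O}(1/M)\Lambda(\mathcal{R}_{W_0})$; applying Lemma~\ref{lem:afat-mit} (equivalently Lemma~\ref{lem:afat}) to convert $\Lambda(\mathcal{R}_{W_0})$ into a multiple of $L^*$ yields the claimed $(1+\mathcal{O}(1/m)+\mathcal{O}(1/M))L^*$. Connectivity is maintained as an invariant: $E^*$ is connected, every $m$-span attaches to it, and every region-span is attached by the connectors just described, so the final $E'\cup E^*$ is connected. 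Finally I would make the graph Eulerian by doubling edges as needed, which at most doubles the added length and preserves both connectivity and the guillotine structure.

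The step I expect to be the main obstacle is proving rigorously that the region-span connectors stay inside $W$ and do not interfere with the recursively chosen cuts in the two child windows $W',W''$ — i.e. that after adding the connectors the edge set still admits perfect cuts at the next level with the bookkeeping absorbed into the $M+24$ budget. The danger, exactly as in Section~\ref{sec:cnn} for general $\alpha$-fat regions, is a connector crossing $\partial W'$ and violating $m$-goodness there. The reason this can be pushed through for unit disks, whereas it fails in general, is that the disks are of bounded diameter and the perfect cut is weakly central, so the region-span and its endpoints sit at distance $\Omega(1)$ from the window boundary while the connectors have length $\mathcal{O}(1)$ bounded by the unit diameter; keeping the connectors on the cut line $l$ confines them to a controlled strip, and counting their contribution to subsequent cuts gives the fixed additive constant $24$. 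Making this confinement and the additive $24$ count fully precise — tracking how many connector-induced crossings a later cut can inherit — is where the detailed argument will be needed.
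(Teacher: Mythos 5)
Your skeleton matches the paper's proof (perfect cuts from Lemma~\ref{lem:wfav-cut}, the dichotomy ``central, or weakly central with empty region-span'', a short connector for the region-span, a charging argument, doubling for the Eulerian property), but two of your central mechanisms are wrong, and they are exactly the places where the real work happens. First, the $+24$ does \emph{not} come from connector-induced crossings inflating later cuts. The paper inserts the $(M+24)$-region-span $\Sigma_{M+24}(l)$, not $\Sigma_M(l)$: since each of the two end pieces of $\Sigma_M(l)\setminus\Sigma_{M+24}(l)$ crosses $12$ disjoint unit disks, Lemma~\ref{lem:afat} gives $\Sigma_{M+24}(l)+2\leq\Sigma_M(l)$, so the $m$-span, the $(M+24)$-region-span, \emph{and} a connector of length at most $2$ together cost at most $\sigma_m(l)+\Sigma_M(l)$, which is what the weakly $8$-favorable cut lets one charge to the $m$-dark and $M/2$-region-dark portions. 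The slack between $\Sigma_M$ and $\Sigma_{M+24}$ is the budget that pays for connectivity; your ``$\Omega(M)$ local lower bound'' is not a substitute, and your claim that the connectors add $24$ crossings to later cuts has no counterpart in the argument (the result is $(m,M+24)$-guillotine because the inserted span only makes the cut $(M+24)$-region-good). Second, your connector routing ``along $l$ toward the nearest $m$-span endpoint'' fails: the $m$-span can be empty while the region-span is nonempty, and there is no $\mathcal{O}(1)$ bound on the distance to an edge crossing of $l$. The paper instead observes that a nonempty $(M+24)$-region-span forces the cut to be central, and the $\geq 12$ disks above and below the span on $l$ force every disk met by the span to lie strictly inside $W$; since the tour visits each such (diameter-$2$) disk, the nearest tour point is within distance $2$, and the straight connector stays in $W$ by convexity. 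This is what resolves your flagged ``main obstacle'', not confinement to the cut line.

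A further genuine gap: you apply the charging constants of Theorem~\ref{thm:charge} ($\sqrt{2}/m$ per edge, $\sqrt{2}/M$ per region boundary, favorable cuts), but perfect cuts are only \emph{weakly} $8$-favorable, i.e.\ measured against $\delta_m(l)+\Delta_{M/2}(l)$, so the direct charges are $8\cdot\frac{1}{2m}$ per edge segment and $8/M$ per disk-boundary segment against the $M/2$-region-dark portions. More importantly, you never account for \emph{indirect} charge: the newly inserted segments (in particular the connectors, which are interior edges of their window) can themselves be charged by later cuts, and since they carry no budget of their own, their charge must be passed back recursively to segments of $E^*\cup\bigcup_D\partial D$. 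This produces a geometric series, and the hypotheses $m\geq 32$ (giving total edge charge at most $\frac{16}{m-16}\leq 32/m$) and the bound $\frac{8}{M}\cdot\frac{32}{m}\leq \frac{8}{M}$ for disks exist precisely to make it converge; without this step the stated length bound does not follow. Finally, you omit termination of the recursion, which the paper derives from the cuts being weakly central; this is minor but not free, since the construction keeps adding interior edges.
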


Using different constants for $m$ and $M$ is not necessary here. Note,
however, that the algorithm has polynomial running time if $m \in
\mathcal{O}(1)$ and $M \in \mathcal{O}(\log n)$, so choosing
$M$ differently might help with different applications. 

\begin{proof}[Theorem~\ref{thm:guillotines}] 
  We recursively partition $W_0$ using perfect cuts. These always
  exist by the previous lemma. For each such cut, we add its $m$-span
  and $(M+24)$-region-span to the edge set (and not the
  $M$-region-span, because if a segment is added, we need a lower
  bound on the length of the $M$-region-span) as well as possibly an
  additional segment for connectivity (see Figure~\ref{pic:guill}).

  \begin{figure}[hbt] \center
    \includegraphics[width=7cm]{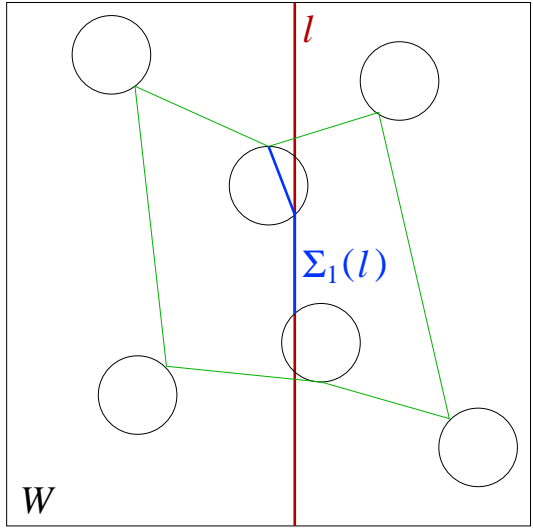} 
    \caption{Adding the blue $1$-region-span $\Sigma_1(l)$ and the
      connecting segment (twice) makes $l$ $1$-region-good}
    \label{pic:guill}
  \end{figure}

We refine Mitchell's charging scheme and assign to each point $x$ of
an edge or the boundary of a disk a ``charge" $c(x)$, such that the
additional length incurred throughout the construction equals $\sum_{D
  \in \mathcal{D}} \int_{\partial D} c(x) \dd x + \sum_{e \in E}
\int_{e} c(x) \dd x$. This charge will be piecewise constant. We will
show that the charge on an edge segment is bounded by $C'/m$, and the
charge of a disk boundary segment is bounded by $C/M$, for constants
$C'$ and $C$. This proves the theorem, because $\sum_{D \in
  \mathcal{D}} \int_{\partial D} C/M \dd x = C \cdot k 2\pi/M$, but
$L^* \geq (k/4 - 1) \cdot 2\pi/4 \geq 2\pi$, hence
$$C \cdot k 2\pi/M \leq \dfrac{8
\pi}{M} \left(\dfrac{4L^*}{\pi} + 1\right) = \dfrac{32C}{M} L^* +
\dfrac{8C\pi}{M} \leq \dfrac{36C}{M} L^*$$ for the disks, and for the
edges, $\sum_{e \in E^*} \int_{e} C'/m \dd x = \frac{C'}{m} L^*$.

From now on, a segment will refer to a disk boundary or edge
segment. In the beginning, every segment has charge 0. Every charge
that is applied to a segment gets added to its previous charge.

We will distinguish direct and indirect charge, and show that each
segment is directly charged at most 4 times, once from each
axis-parallel direction. Indirect charge will be charge that is added
to a segment in $E'$ that was constructed during the proof. Since we
cannot charge these segments, we pass their charge on: The new
segments at some point were charged to a segment in $E^*$ or $\bigcup_{D \in
\mathcal{D}} \partial D$, to which we add the new charge
recursively. For example, in Figure~\ref{pic:guill}, the blue region
span cannot be charged by any cut, since it is on the boundary of a
window. On the other hand, the connecting segment might be charged by
a different cut during the construction. If the direct charge for
inserting the blue edges was applied to the disks making (different)
parts of the cut 1-region-dark, then if the connecting segment is
charged, we will instead pass the charge on to the disks (each of them
receives half the charge). 

Now, let $W$ be a window with perfect cut $l$. Then,
we add its $m$-span and $(M+24)$-region-span to $E'$. Furthermore, we
have that if the $(M+24)$-region-span is non-empty, it is not
necessarily connected to the tour. But in this case the cut is
central. Therefore, no disk intersected by the $(M+24)$-region-span
can intersect the boundary of the window ({\Wlog} let $l$ be vertical:
Then no disk can intersect the left and right boundary, because the
cut is central, and the parts of the cut above and below the
region-span intersect at least 12 disks each, therefore their length
is at least 1). Connect the $(M+24)$-region-span to the closest point
of the tour within the same window (which will have distance $\leq 2$,
because all the disks are visited). Note the the $m$-span, by
definition, is connected to the edge set, hence the connecting
segments for the $M$-span are sufficient for connectivity.

This makes the cut $m$-good and $(M+24)$-region-good and leaves the
edge set connected, therefore recursive application will make the
entire window $(m, M+24)$-guillotine.

We have to show that this procedure terminates. But this follows from
the fact that all cuts are weakly central: Each recursion step reduces
one coordinate of the window by at least 1 or $1/4$ of its
width/height. We only add new edges in the interior of a subwindow
when the $M+24$-region-span is non-empty, therefore, for small enough
windows, we do not add edges to the interior of their subwindows. At
that point, the minimum length, width and height of an edge inside of
the window remains fixed, so at some point, all edges that lie
completely in the interior of $W$ will be axis-parallel, and as soon
as one coordinate gets small enough, all of them are parallel. But
then we can cut between them, so each window only contains one such
edge. And lastly, we can cut that edge in half. (There is probably a
simpler argument.) 

There are separate arguments for the length charged to disks and
edges. For each cut $l$, we have added a length of $\sigma_m(l) +
\Sigma_M(l)$. We can charge it off as follows: There are $2m$ edge
segments making a segment of length $\delta_m(l)$ $m$-dark. Charge
each of them with $8 \cdot 1/2m$. (More precisely, these are not
necessarily the same edges or even connected, but there are edges of
total length at least $2m \delta_m(l)$ within the same window such
that their orthogonal projection onto $l$ intersects at most $m-1$
other edges in $E' \cup E^*$. We can charge all of those with $8 \cdot
1/2m$.)

Similarly, there are disk boundary segments of total length $M \cdot
\Delta_{M/2}(l)$ making parts of the cut $M/2$-region-dark. We can
charge each of them with $8/M$. 

In total, the charged length is $M \cdot \Delta_{M/2}(l) \cdot 8/M + 2m
\cdot \delta_m(l) \cdot 8 \cdot 1/2m = 8 (\Delta_{M/2}(l) + \delta_m(l))
\geq \sigma_m(l) + \Sigma_M(l)$, because the cut is $8$-favorable.  

We also know that $\Sigma_{M+24}(l) + 2 \leq \Sigma_{M}(l)$, because
the additional segments in $\Sigma_M(l)$ both visit 12
disks. Therefore, $\sigma_m(l) + \Sigma_M(l) \geq \sigma_m(l) +
\Sigma_{M + 24}(l) + 2$, which is an upper bound on the actual length
of what we insert -- $m$-span, $(M+24)$-region-span, and possibly a
connecting segment of length at most 2.

So the charge indeed will be an upper bound on the additional length
as described in the beginning of the proof.

It remains to show that each segment is charged directly only a
constant number of times, and that the total indirect charge is
sufficiently small. 

For edge segments, there are at most $m-1$ other segments between a
charged segment $e$ and the cut $l$. But the cut then becomes the
boundary of the next subwindow. Therefore, this edge will not make a
cut $l'$ between $e$ and $l$ $m$-dark, and not be charged again from
this direction. Since all cuts are axis-parallel, each edge is indeed
only charged at most 4 times.

For disks, the same argument works: Only the $M/2$ disks closest to a
cut (and making it $M/2$-dark or even $M$-dark) in a given direction
are charged, and each disk can only be among those and make the cut
$M/2$-dark once for each direction. 

Finally, we have to take care of indirect charge. This applies to both
disks and edges, because while our analysis shows that we can upper
bound the additional length for the connecting segments by the
$M$-region-span, the length of this span may be accounted for by
$m$-dark and not by $M/2$-region-dark segments. 

But for each edge segment, the direct charge is at most $16/m$. The
segment of that length might be charged with $16/m$ again, adding a
charge of $(16/m)^2$ to the original segment, yielding a geometric
series. Therefore, the total charge is at most $\frac{16}{m-16} \leq
32/m$ for $m \geq 32$.

For disks, the same analysis works: The direct charge is at most
$8/M$, and since $m \geq 32$, the indirect charge is bounded by $8/M
\cdot 32/m \leq 8/M$. 

Finally, we duplicate each new edge segment to make the resulting
graph Eulerian, increasing the additional length by a factor of
2. This concludes the proof.
\end{proof}

We did not use the fact that the regions are unit disks: It it
sufficient to assume they are disk-like and modify the constants
accordingly. 

\subsection{Grid}
By computing a $(1+\vare)$-approximation of a tour visiting the
centers of the disks, we obtain a TSPN tour which is at most an
additive $2k (1+ \vare)$ from the optimum (for $k$ large, this is a
constant-factor approximation algorithm and was analyzed in
\cite{dumi}). If the tour has length at least $\frac{2k
  (1+\vare)}{\vare}$, this is a sufficiently good solution; otherwise,
the disks are within a square of size $\lceil 3k/\vare \rceil \times
\lceil 3k/\vare \rceil$, if $\vare \leq 1/3$ and $k \geq 6$.

Such a square can be found (or shown that none exists) in polynomial
time. We then equip it with a regular rectilinear grid with edge
length $\delta := (2 \lceil k/\vare \rceil)^{-2}$. 

\begin{definition} \label{def:grid}
  An edge set is \textbf{grid-rounded} w. r. t. a grid $G$, if all
  edge endpoints are on the grid. A polygon is grid-rounded, if its
  boundary is grid-rounded. A set of regions is grid-rounded, if all
  regions are grid-rounded polygons. 

  A coordinate (or axis-parallel line segment) is said to be a
  \textbf{half-grid coordinate} of $G$, if it is on the grid or in
  the middle between two consecutive grid points.
\end{definition}

At a cost of a factor $(1+\vare)$, the instance can be grid-rounded,
such that every disk center is on a grid point. The same can be done
for an optimum solution, i. e. every edge should begin and end in a
grid point. Such a solution is still feasible, if we replace each disk
$D_i \in \mathcal{D}$ by the convex hull of the set of grid points
$\Gamma_i$ it contains. As this convex hull contains at least the
diamond inscribed in the disk (a square of area 2), and it has
rotational symmetry, it will still be $\alpha$-fat$_E$, for slightly
smaller $\alpha$. The previous theorem still holds for these regions
(with modified constants). 

The regions $\conv(\Gamma_i)$ are polygonal. Therefore, in
Lemma~\ref{lem:wfav-cut}, the functions $\delta_m, \Delta_M, \sigma_m$
and $\Sigma_M$ are piecewise linear, with discontinuities at grid
points. This implies:

\begin{lemma}   
  \label{lem:grid-cut} Given grid-rounded disk-like regions and a
  window $W$ with half-grid coordinates, there is a
  perfect cut with half-grid coordinates.
\end{lemma}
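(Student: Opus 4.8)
The plan is to re-prove Lemma~\ref{lem:wfav-cut} essentially verbatim, but to certify at every step that the cut it produces can be taken at a half-grid coordinate of $G$ (Definition~\ref{def:grid}); since a window with half-grid edges that is split by a half-grid cut yields two subwindows again with half-grid edges, this is exactly what the recursion in Theorem~\ref{thm:guillotines} needs. First I would record the one structural fact that grid-rounding genuinely buys us: because every edge of the current edge set has grid endpoints, every region is a polygon $\conv(\Gamma_i)$ with grid vertices, and edges meet only at grid vertices (planarity) while the disjoint region boundaries likewise meet only at grid vertices, the set of segments that a vertical cut $l_x$ crosses and their order along the cut are constant on each open grid column, and each crossing coordinate is affine in $x$ there. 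Hence the span functions $\sigma_m(l_x)$ and $\Sigma_M(l_x)$ (and their horizontal analogues) are piecewise affine with breakpoints only at grid coordinates, and the centrality / weak-centrality conditions, being ``distance $\ge 2$'' or ``$\ge \min\{2, w/4\}$'' from the window sides, hold on whole subintervals and therefore at half-grid coordinates once $\delta$ is small (here $\delta = (2\lceil k/\vare\rceil)^{-2}$).

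The genuine obstacle is that the same \emph{fails} for the dark functions $\delta_m(l_x)$ and $\Delta_{M/2}(l_x)$. A point $(x,y)$ is $m$-dark exactly when $x$ lies in the horizontal $m$-span $[a_m(y), b_m(y)]$, so $\delta_m(l_x) = |\{y : a_m(y) \le x \le b_m(y)\}|$; the functions $a_m, b_m$ are order statistics of the crossing coordinates, and although their breakpoints in $y$ sit at grid heights, their \emph{values} there are crossings of a grid-height line through the \emph{interior} of a slanted grid segment and are in general off-grid. Consequently $\delta_m(l_x)$ is piecewise affine with breakpoints at arbitrary, non-half-grid $x$, and the ``sign of the integral equals sign at the midpoint'' observation cannot be applied to $\delta_m + \Delta_{M/2} - \sigma_m - \Sigma_M$ directly. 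I would therefore not try to locate a favorable half-grid cut by integrating the favorability slack cellwise. Instead I would take the perfect cut $l_{x_0}$ guaranteed by Lemma~\ref{lem:wfav-cut} at its real coordinate $x_0$ and round it to the nearest half-grid coordinate $x^*$, then show the perfect-cut properties survive the perturbation $|x^*-x_0| \le \delta/2$. The span side is controlled by affineness (the change in $\sigma_m + \Sigma_M$ is at most the local slope times $\delta/2$); the dark side is Lipschitz with slope bounded in terms of the number of segments, and both corrections are negligible against the constant slack of weak $8$-favorability, because $\delta$ is polynomially tiny while the dark lengths in play are $\Omega(1)$.

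Carrying this out splits, as in Lemma~\ref{lem:wfav-cut}, into the wide and narrow window cases, and this is where the bookkeeping lives. For a window of width $\ge 8$ there is a real central $2$-favorable vertical cut $x_0$; rounding it to a half-grid $x^*$ inside the same central band keeps it central, and the gap between $2$-favorability and weak $8$-favorability absorbs the perturbation, so emptiness of the region-span is not even needed. For a window of width $a < 8$ the construction is explicit: one builds a horizontal cut $l_y$ whose $M$-region-span is empty and which is $M/2$-region-dark over a segment of length $\ge a/8$. Emptiness is the robust part — it follows purely from the disk-count bound (at most $48$ disk crossings when $a < 8$, against $M \ge 24$) via Lemma~\ref{lem:afat}, and a nonempty region-span in turn forces $a = \Omega(1)$, so it transfers unchanged to the rounded coordinate. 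The delicate step, which I expect to be the main obstacle, is re-certifying $\Delta_{M/2}(l_{y^*}) \ge a/8$ and weak centrality after rounding the median height $y$ to a half-grid $y^*$: one must argue that moving it by at most $\delta/2$ neither destroys the $M/2$ disks whose interiors make the segment region-dark nor pushes $l_{y^*}$ within distance $2$ of the top or bottom. Both should follow because each such disk still has $\Omega(1)$ width inside the window and the grid is far finer than any of these lengths, but making the dependence on $\delta$, $a$ and the region-count bounds explicit — precisely where the off-grid behavior of the dark functions is absorbed — is the technical heart of the proof.
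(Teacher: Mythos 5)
Your route is genuinely different from the paper's, and it contains a gap. The paper never perturbs a real cut: its proof of Lemma~\ref{lem:grid-cut} reruns the averaging argument of Lemma~\ref{lem:wfav-cut} \emph{discretely}, asserting that $\sigma_m, \Sigma_M, \delta_m, \Delta_M$ are piecewise linear with breakpoints only at grid coordinates, so that each integral $\int f$ equals $\delta$ times the sum of the values of $f$ at the half-grid points (midpoint rule for functions affine on each grid cell); the favorable cut is thereby \emph{located} at a half-grid coordinate directly, and no stability under perturbation is ever invoked. Your structural observation that the dark functions need not be affine on grid cells -- their kinks sit at the coordinates where the span-boundary order statistics cross grid lines, which are off-grid for slanted edges -- is a fair criticism of the paper's one-line justification, and a careful version of the paper's argument has to address exactly this point; but it does not license your substitute argument.

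The failure is in your wide-window case. Within one open grid column the span functions are indeed affine, but their slopes are the slopes of the crossing edges, and a grid-rounded edge may be near-vertical with slope $\Theta(1/\delta)$; hence moving the cut by $\delta/4$ can change $\sigma_m$ (and, through near-vertical dark-region boundaries, also $\delta_m$ and $\Delta_M$) by a \emph{constant}, not by $\mathcal{O}(\delta)$ -- ``polynomially tiny $\delta$'' buys nothing against Lipschitz constants of order $1/\delta$, and if the nearest half-grid coordinate is a grid coordinate, the functions may not even be semi-continuous there, as the paper itself notes in Section~\ref{sec:grid}. Worse, $c$-favorability is a purely multiplicative condition with no additive slack: a central $2$-favorable cut $l_{x_0}$ may have $\delta_m(l_{x_0}) + \Delta_M(l_{x_0})$ arbitrarily small or zero (an empty $m$-span over a dark-free stretch is $2$-favorable, indeed perfect), and then no constant-factor gap between $2$-favorable and weakly $8$-favorable absorbs any perturbation at all. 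Your claim that ``the dark lengths in play are $\Omega(1)$'' is available only in the narrow-window case, where it genuinely comes from the unit size of the disks (the constructed cut has distance $> 2 \gg \delta$ from the span endpoints and the window sides, and emptiness of the $M$-region-span holds for \emph{every} horizontal cut in the window); that part of your plan is sound, and you correctly flag its re-certification as the remaining work. To repair the wide case you would have to locate the half-grid cut directly, e.g.\ pick a grid cell with $\int_{\mathrm{cell}} \bigl(\delta_m(l_x) + \Delta_M(l_x) - \sigma_m(l_x) - \Sigma_M(l_x)\bigr) \dd x \geq 0$ and reason about its midpoint -- at which point you need per-cell control of the dark functions, i.e.\ you are back at the issue you raised rather than around it.
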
 

\begin{proof}
  This follows from Lemma~\ref{lem:wfav-cut}, because the integrals
  involved can be replaced by ($\delta$ times) the sums of the
  function values at all half-grid points (that are not grid
  points). For piecewise linear functions, this sum is equal to the
  integral.
\end{proof}

If the $m$-span and $M$-region-span are inserted at half-grid cuts
(with their endpoints not even at half-grid points), the solution does
not remain on the grid. In particular, the connecting segment for the
$M$-region-span could lead to discontinuities in $\sigma_m(l)$ and
$\delta_m(l)$ at non-grid points, thus preventing the recursive
application of this lemma. Therefore, it will be moved to the grid in
the following. 

We can assume that every disk is visited by the endpoint of an edge
that lies on the grid. This costs a factor of $(1+\vare)$ and means
that when restructuring the edge set, as long as the resulting graph
remains Eulerian, connected and has the same (or more) edge endpoints,
we can still extract at TSPN tour from it. Therefore, the following
lemma can be applied without taking the regions into account:

\begin{lemma} \label{lem:span} Consider a regular rectilinear grid
  with edge length $\delta$ and a grid-rounded set $E$ of edges that
  is an optimum tour of the set of its edge endpoints, and a window
  $W$.

  Let $l$ be a cut in $W$, such that its $m$-span is non-empty and $l$
  has distance at least $\delta$ from the boundary edges of the window
  that are parallel to $l$.

  If the $x$- or $y$-coordinate of $l$ (depending on its
  orientation) is a grid coordinate, and $m$ intersects at least 15
  different edges in their interior (or has 16 intersection points
  with edges), then $\sigma_m(l) \geq \delta$. 

  Furthermore, if $l$ is in the center between two consecutive grid
  coordinates, then $\sigma_m(l) \geq \delta$, if its $m$-span
  intersects at least 19 different edges (necessarily in their
  interior, because their endpoints can only be on the grid).
\end{lemma}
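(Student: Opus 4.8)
The plan is to argue the contrapositive: assuming $\sigma_m(l) < \delta$, I will bound the number of edges crossing the $m$-span. Assume \Wlog that $l$ is vertical with $x$-coordinate $x_0$, and first treat the case that $x_0$ is a grid coordinate. If $\sigma_m(l) < \delta$, then the $m$-span is a vertical segment of length less than $\delta$, so it meets at most one horizontal grid line and is therefore contained in the closure of a single horizontal grid strip $S_j = \{(x,y) : j\delta < y < (j+1)\delta\}$ (or straddles the line $y = j\delta$ separating two adjacent such strips). The first step is the observation that any edge of $E$ crossing the $m$-span transversally in its interior must straddle this strip: since its endpoints are grid points (hence at heights that are multiples of $\delta$) while its intersection with $l$ has height strictly inside an open strip, its two endpoints lie at heights $\le j\delta$ and $\ge (j+1)\delta$ respectively, so the edge crosses both bounding grid lines $y = j\delta$ and $y = (j+1)\delta$.

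Next I would use that $E$ is a shortest tour, and hence non-self-crossing, to control the number of such strip-straddling edges. Ordering the crossing edges by the height at which they meet $l$, pairwise non-crossingness makes them a nested family of chords of the strip. The heart of the argument, and the step I expect to be the main obstacle, is to show that a \emph{shortest} tour cannot route $15$ or more such chords through a window of height less than $\delta$ on $l$. Non-crossingness alone does not suffice here: one can exhibit arbitrarily many pairwise non-crossing grid-to-grid chords of a strip all meeting $l$ within an arbitrarily short sub-interval, so the bound genuinely requires optimality. I would derive it from a local exchange: because every such chord has vertical extent at least $\delta$ while the points where the chords meet $l$ lie mutually within distance $\sigma_m(l) < \delta$, the stubs of two suitably chosen chords near the cut are far shorter than the chords themselves; traversing the tour as a single closed curve and using a parity/orientation pigeonhole over the $\ge 15$ crossings, one locates a pair of chords whose unique connectivity-preserving $2$-opt reconnection strictly shortens the tour, contradicting optimality. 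The threshold $15$, and the companion bound of $16$ on the number of intersection points (which absorbs degenerate crossings where an edge endpoint happens to lie on $l$), are exactly what this pigeonhole requires.

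Finally I would treat the case that $l$ lies midway between two consecutive grid coordinates, so that $x_0$ is a half-grid coordinate. The reductions are identical, but two things change. First, an edge crossing $l$ transversally now needs only horizontal extent $\delta$ (its grid endpoints straddle $x_0$ at distance $\delta/2$ on either side) rather than $2\delta$ as in the grid-coordinate case, which weakens the per-chord length gain in the exchange. Second, since $x_0$ is not a grid coordinate, no edge endpoint can lie on $l$, so every intersection point is a genuine interior transversal crossing and must be counted among the distinct edges. Re-running the exchange with these two modifications replaces the threshold $15$ by $19$, while the reduction to strip-straddling chords and the orientation pigeonhole are unchanged. Throughout, the hypothesis that $l$ has distance at least $\delta$ from the boundary edges of $W$ parallel to $l$ guarantees that the two bounding grid lines of the relevant strip, and the chords' crossings with them, lie inside $W$, so that the entire counting takes place within the window.
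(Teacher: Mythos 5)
Your reduction (contrapositive, grid endpoints forcing each transversal crossing edge to have extent $\Omega(\delta)$ near the cut, optimality supplying the contradiction) matches the paper's starting point, but the heart of your argument --- the step you yourself flag as the main obstacle --- is missing, and the route you sketch for it does not obviously work. You propose to find, among the $\geq 15$ crossing chords, a single pair admitting a strictly shortening $2$-opt exchange, located by a pigeonhole on traversal orientation. Two problems: first, for an opposite-direction pair the connectivity-preserving reconnection joins a left endpoint to a right endpoint, so it still crosses the cut and yields no gain; the pigeonhole thus only delivers $\geq 8$ same-direction chords, and for those the gain $|ab|+|cd|-|ac|-|bd|$ is \emph{not} guaranteed: two same-direction, non-crossing chords meeting the span within distance $<\delta$ can have very different lengths and a modest angle between them (the error term in $|ac|$ scales like $\theta\sqrt{s_1 s_2}$, with $s_1 s_2$ unbounded), so a pair chosen by direction alone may give a strictly \emph{longer} reconnection. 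No quantitative argument is given that some pair among eight must yield a gain, and the claim that the thresholds $15$ and $19$ are ``exactly what this pigeonhole requires'' is asserted, not derived. A second, smaller gap: your strip-straddling claim fails for edges crossing the span exactly at a grid height (e.g.\ a horizontal edge through the grid point $(x_0, j\delta)$ has vertical extent $0$), a case you would need to treat separately.

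The paper avoids pairwise exchanges entirely and instead performs a \emph{collective} Arora-style patching: it expands the short span to the nearest grid coordinates, erects a grid-rounded box of width $2\delta$ (and height $\delta$ or $2\delta$) around it, removes from every crossing edge its inner portion of length $\geq \delta$ (resp.\ $\geq \delta/2$ in the half-grid case), slides each remaining stub's boundary intersection to a grid point on the box without increasing length, and reconnects and fixes parity along the box boundary at explicit cost $8\delta + 2\delta + 4\delta = 14\delta$ (grid case) resp.\ $6\delta + 3\delta = 9\delta$ (half-grid case). Because the construction's cost is a fixed constant times $\delta$ while the savings scale linearly with the number of crossing edges, optimality immediately caps the number of crossings at $14$ resp.\ $18$, which is precisely where the thresholds $15$ and $19$ come from. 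This ``cost constant, gain per edge'' accounting is the idea your proposal lacks; without it (or a genuinely worked-out exchange argument controlling both angles and chord lengths), the central step of your proof remains open.
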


This is a grid-rounded version of Arora's patching lemma~\cite[Lemma
3]{arora}. The proof uses similar ideas and exploits the grid structure to
show that in some cases, the patching construction actually decreases
the tour length. 

\begin{proof}
  First, let $l$ be on the grid and without loss of generality
  vertical. If $l$ has 16 intersection points with edges, then either
  two of them are (different) grid points, and $\sigma_m(l) \geq
  \delta$, or at $l$ intersects at least 15 different edges in their
  interior, therefore it is sufficient to consider this case. 

  If $\sigma_m(l) < \delta$, this configuration can never be
  optimal. This follows form the construction in
  Figure~\ref{pic:patching}: Expand the $m$-span to the nearest grid
  coordinates above and below, and then consider the box that has as
  left and right edge this expanded $m$-span translated by $\pm
  \delta$. This box will have height $\delta$ or $2\delta$ and width
  $2\delta$. If it has height $2\delta$, an additional length of $2
  \delta$ is used to connect to the grid point in the center of the
  box.

  \begin{figure}[hbt] \center
    \includegraphics[width=4cm]{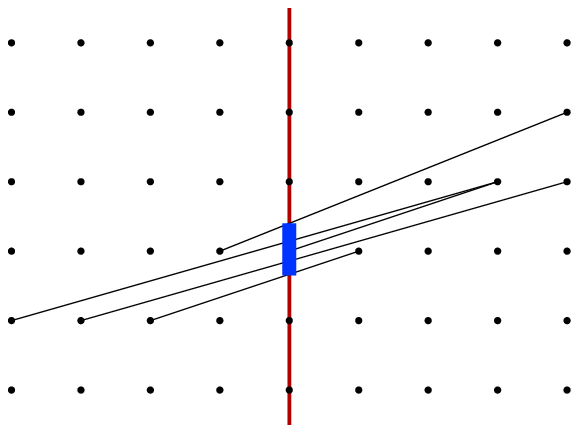} \hspace*{0.3cm}
    \includegraphics[width=4cm]{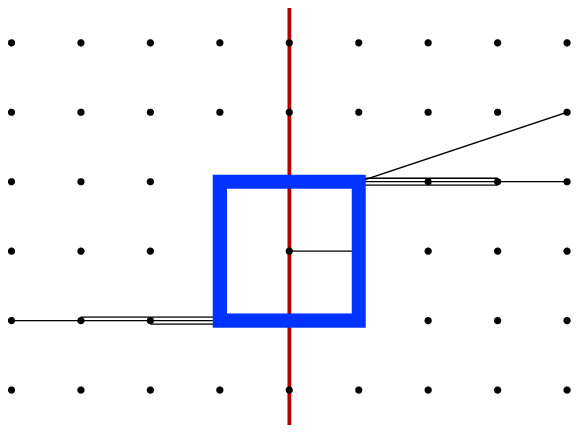}
    \caption{Before and after applying the grid patching lemma to grid
      cuts}
    \label{pic:patching}
  \end{figure}

  For every edge intersecting the box, split it into different parts
  at the intersection points. The inner part (inside of the box) has
  length at least $\delta$, but it does not visit any new endpoints,
  and can therefore be removed. For the other parts, if we consider
  their second endpoint (not on the boundary of the box) fixed and
  choose their first endpoint among the points of the boundary of the
  box, they become shortest possible when connected in such a way,
  that the endpoint is a vertex of the box or the segment is
  orthogonal to the boundary edge of the box. In both cases, the first
  endpoint is a grid point. Therefore, for these parts there is a grid
  point on the boundary of the box, such that the intersection point
  can be moved there without increasing the length of the edge set. 

  This preserves connectivity and parity except possibly on the
  boundary of the box. Since it has perimeter $\leq 8 \delta$, edges
  of length $\leq 4 \delta$ can be used to correct parity. 

  Overall, we have added edges of total length $8 \delta
  + 2 \delta + 4 \delta = 14 \delta$ and removed edges of length $\geq
  \delta$ for every edge intersecting $\sigma_m(l)$ in its
  interior. Hence, for an optimal tour, there can be at most 14 such
  edges.

  If $l$ is not on the grid, but at a half-grid coordinate, we can
  apply a similar argument and construction, see
  Figure~\ref{pic:patching2}.

  \begin{figure}[hbt] \center
    \includegraphics[width=4cm]{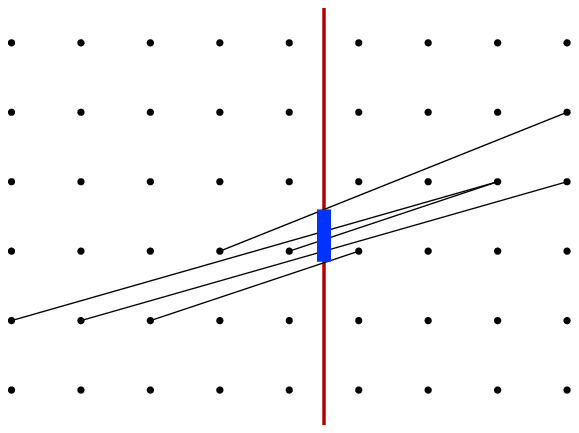} \hspace*{0.3cm}
    \includegraphics[width=4cm]{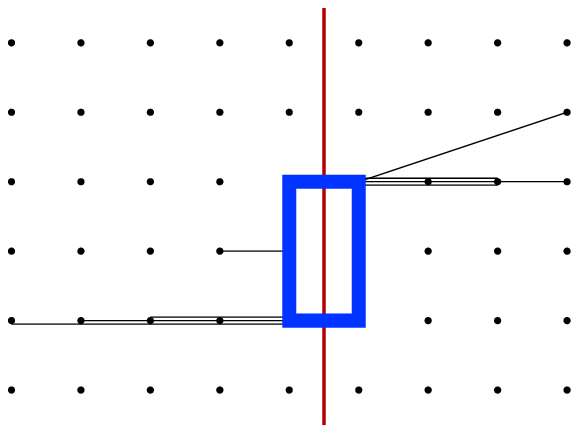}
    \caption{Before and after applying the grid patching lemma to
      half-grid cuts}
    \label{pic:patching2}
  \end{figure}

  The box has perimeter $\leq 6 \delta$ and for every edge
  intersecting it, the inner segment of length at least $\delta/2$ can
  be removed. There are no interior points to be visited, and
  correcting parity costs at most $3 \delta$. Therefore, there can be
  at most $\frac{3 \delta + 6 \delta}{\delta/2} = 18$ edges
  intersecting the $m$-span, if its length is less than $\delta$. 

\end{proof}

For the $M$-region-span, no such lemma is needed, because if we insert
it, we can also afford a connecting segment of length $2$. Therefore,
it can be extended to the grid without increasing the length we used
for the entire construction by more than a factor of $2$ (and
actually, $1 + 2/\delta$) -- provided the cut is on the grid.

Choosing only cuts on the grid is not sufficient, as the following
example shows: Even without regions, there is no $1$-good cut with
grid coordinates. 

\begin{figure}[hbt] \center
  \includegraphics[width=2cm]{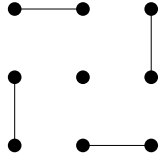}
  \caption{No 1-good cut with grid coordinates}
  \label{pic:grid-cut}
\end{figure}

The recursive construction in Theorem~\ref{thm:guillotines} makes cuts
$m$-good and $M$-region-good by inserting edges on the cut, which is
not possible for a cut that is not at a grid coordinate. Since we
cannot change the position of the regions, no modification of the edge
set (preserving grid-roundedness) can make the cut
$M$-region-good. Moving the cut to the grid is also not an option,
since Figure~\ref{pic:grid-cut} shows that this is not always possible.

Therefore, we cannot hope to find an $(m,M)$-guillotine subdivision
with this construction. However, for algorithmic purposes, the main
aim of the guillotine property was avoiding the enumeration of which
subproblem is responsible for visiting which regions. What happens, if
we make them \emph{both} responsible for all regions in the
$M$-region-span?

First, note, that ``smaller'' subproblems never rely on the fact that
their containing windows are guillotine, because they do not yet know
the respective cuts -- with the exception of the four cuts defining
their boundaries. For these cuts, we can easily enumerate the
possibilities ``visit all regions in the $M$-region-span'' and ``visit
none of them''. 

Intuitively, it seems that this construction would significantly
increase the length of the subdivision. But we know that those regions
can be visited by a segment with at most the length of the
$M$-region-span. More importantly, we know that they can be visited on
each side of the cut by a segment such that their combined length is
at most $2 \Delta_M(l)$, which we can afford by the charging
scheme. Not both of these can necessarily be connected to $E$ within
their containing window (but at least one), therefore we should add
(or at least ``reserve'') an edge across the cut, i. e. only obtain an
$(m+1, M+24)$-guillotine subdivision.

To accommodate these changes, we redefine $M$-region-good and thus
obtain a new $(m,M)$-guillotine property: 
\begin{definition}
  A cut $l$ is \textbf{$\boldsymbol{M}$-region-good} with respect to
  $W$, $\mathcal{R}_{W_0}$ and $E$, if there are no two regions $R, R'
  \in \mathcal{R}$ that intersect the $M$-region-span of $l$, but $E$
  visits $R$ and $R'$ only on different sides of the cut. 
\end{definition}

In other words, if $l$ does not visit $R$ on one side of the cut, it
must visit all other regions that intersect the $M$-region-span on the
other side of the cut. Here, ``side of the cut'' denotes a closed
half-space; in particular, a cut that was $M$-region-good w. r. t. the
previous definition remains $M$-region-good, since the $M$-region-span
is in $E$ and thus $E$ visits all regions in the $M$-region-span on
both sides of the cut. 

The following two lemmas show that this construction works -- for both
edges and regions, we can replace the operations ``insert the span''
by one of the following in the charging scheme, and get the statement
of the charging scheme (with modified constants) for grid-rounded
subdivisions. 

\begin{lemma} \label{lem:m-good} Given a perfect
  half-grid cut $l$ in a window $W$ of width $\geq \delta$, and a
  connected Eulerian grid-rounded edge set $E$, then there is a
  grid-rounded edge set $E'$ that differs from $E$ only at edges
  intersecting $W$, has the $(m', M')$-guillotine property outside of
  $W$ if $E$ does, is by at most an additive
  $\mathcal{O}(\sigma_m(l))$ longer than $E$, visits the same (or a
  superset) of the grid points $E'$ visits, and is connected and
  Eulerian, such that $l$ is $(m+9)$-good w. r. t. $E'$ and $W$.
\end{lemma}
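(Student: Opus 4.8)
The plan is to take the perfect half-grid cut $l$ and make it $(m+9)$-good by inserting the $m$-span on the cut and then repairing grid-roundedness using Lemma~\ref{lem:span}. The key point is that $l$ is a half-grid coordinate, so if I simply add the segment $\sigma_m(l)$ to $E$, its two endpoints are intersection points of $l$ with $E$, and therefore they are automatically connected to the existing edge set; connectivity and parity are thus preserved up to local corrections. The only difficulty is that the endpoints of $\sigma_m(l)$ need not be grid points, so the new edge set is no longer grid-rounded. To fix this I would apply the patching construction of Lemma~\ref{lem:span} to the $m$-span of $l$: by that lemma, if $\sigma_m(l) < \delta$ and the span would intersect too many edges (at least $15$ in their interior for a grid cut, or $19$ for a half-grid cut), the configuration cannot be optimal, so in an optimum tour the $m$-span either has length $\geq \delta$ or intersects only a bounded number of edges.

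First I would distinguish the two cases. If $\sigma_m(l) \geq \delta$, then the additional length $\mathcal{O}(\sigma_m(l))$ is large enough to absorb the cost of snapping the (at most two) endpoints of the span to the nearest grid points and correcting parity with short edges of length $\mathcal{O}(\delta) = \mathcal{O}(\sigma_m(l))$; this keeps $l$ $m$-good (after the grid-rounding, a few extra edge crossings of the cut can be introduced, which is why the statement only guarantees $(m+9)$-goodness rather than $m$-goodness). If instead the $m$-span is short, Lemma~\ref{lem:span} bounds the number of edges crossing it by a constant ($14$ for a grid cut, $18$ for a half-grid cut), so the box-patching construction from that lemma's proof can be applied directly: every edge crossing the box has an interior segment of length $\geq \delta$ (or $\geq \delta/2$) that can be removed, and each crossing point can be moved to a grid point on the boundary of the box without increasing length, with the box perimeter and parity-correcting edges contributing only $\mathcal{O}(\delta)$. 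Since the $m$-span is non-empty, $\sigma_m(l) > 0$ and the additive cost is $\mathcal{O}(\delta) = \mathcal{O}(\sigma_m(l))$ in this regime as well.

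Throughout, I would verify the bookkeeping claims of the lemma statement one at a time: the construction only touches edges that intersect $W$ (indeed only those near $l$), so the $(m',M')$-guillotine property of $E$ outside $W$ is untouched; the grid points visited by $E$ are preserved because the patching removes only interior segments with no endpoints to visit and relocates crossing points to grid points; connectivity and the Eulerian property follow because adding $\sigma_m(l)$ connects to $E$ at both endpoints and the parity corrections on the box boundary restore even degree. The extra slack of $9$ in $(m+9)$-good comes from the parity-correcting and grid-snapping edges that cross the cut: the box in Lemma~\ref{lem:span} has bounded perimeter, so only a bounded number ($\leq 9$) of new edges can intersect $l$, and absorbing them into the span count makes $l$ $(m+9)$-good.

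The main obstacle I expect is the interplay between grid-rounding and $m$-goodness: snapping the $m$-span endpoints and inserting parity-correcting edges on the box boundary can create new crossings of $l$, so $l$ may fail to be $m$-good even after the span is inserted. This is precisely why the lemma only claims $(m+9)$-goodness, and the delicate part is to argue that the number of these extra crossings is bounded by a small absolute constant independent of $m$, $M$, and the instance size. I would handle this by bounding the geometry of the patching box (its perimeter is $\mathcal{O}(\delta)$) and noting that only edges within $\mathcal{O}(\delta)$ of $l$ can be affected, so at most a constant number of new segments can straddle the cut; the constant $9$ then follows from counting the box corners and the at most two span endpoints together with the parity-correction edges, all of which is a routine but careful case analysis.
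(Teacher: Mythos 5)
There is a genuine gap at the heart of your construction. You propose to ``simply add the segment $\sigma_m(l)$ to $E$'' and then repair grid-roundedness by snapping the (at most two) endpoints of the span to the nearest grid points. For a half-grid cut this cannot work, and the paper itself flags exactly this: a vertical line at a half-grid $x$-coordinate contains no grid points at all, so a segment lying on $l$ can never have grid endpoints, and once you snap its endpoints to the grid the segment leaves $l$ entirely -- after which $\sigma_m(l) \not\subseteq E'$, so the cut is not $m$-good, which was the whole point (compare the remark preceding Figure~\ref{pic:grid-cut}: making a cut $m$-good ``by inserting edges on the cut \dots is not possible for a cut that is not at a grid coordinate''). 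The paper's proof of Lemma~\ref{lem:m-good} therefore never inserts anything on $l$ in the half-grid case: it builds an ``H'' of length at most $5\delta + 2\sigma_m(l)$, consisting of two vertical grid-rounded segments on the grid lines adjacent to $l$ joined by a horizontal bar crossing $l$ (Figure~\ref{pic:half-cut}), and then reroutes every edge meeting the H by moving its intersection points to grid points via an explicit length-nonincreasing argument (the angle-monotonicity step, with subdivision at grid points to preserve planarity and the $(m',M')$-guillotine property). After this rerouting the $m$-span of $l$ degenerates to the single point where the bar crosses $l$, which lies in $E'$, so $l$ becomes genuinely $m$-good, and parity is restored by duplicating H-edges. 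This rerouting machinery is entirely absent from your proposal, and without it grid-roundedness and $m$-goodness of a half-grid cut are irreconcilable.

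Your accounting of the constant $9$ is also not the paper's: it does not arise from counting parity-correction edges that cross the cut (the construction above makes the cut $m$-good outright). Rather, the slack is what lets the case analysis begin: if $l$ fails to be $(m+9)$-good, then at least $19$ edges lie on the $m$-span, which is precisely the threshold in Lemma~\ref{lem:span} guaranteeing $\sigma_m(l) \geq \delta$, so the $\mathcal{O}(\delta)$ overhead of the H is absorbed into $\mathcal{O}(\sigma_m(l))$; and since $E$ in Lemma~\ref{lem:m-good} is \emph{not} assumed optimal (Lemma~\ref{lem:span} presupposes an optimum tour of the edge endpoints), the paper covers the remaining case by observing that the patching construction then strictly shortens $E$ and yields an $(m+1)$-good cut directly. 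Your short-span branch, in which you run box-patching on a span crossed by at most $18$ edges, is by the same contrapositive vacuous (such a cut is already $(m+9)$-good) and, as stated, relies on an optimality hypothesis the lemma does not grant you.
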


If the window has width $< \delta$, then the constructions in the
proof might not be inside of the window. On the other hand, such a
window is $(m,M)$-guillotine by definition, since it cannot contain
(the entire interior of) grid edges in its interior.

\begin{proof}
  Without loss of generality, let $l$ be a vertical cut. If $l$ is
  $(m+9)$-good, there is nothing to show. Otherwise, there are at
  least $19$ edges on the $m$-span, so it has length at least $\delta$
  by Lemma~\ref{lem:span}, or $E$ can be made shorter by applying the
  construction there, thereby making the cut $(m+1)$-good. 

  If $\sigma_m(l) \geq \delta$, there are two cases: If $l$ is not on
  the grid, as in Figure~\ref{pic:half-cut}, we insert an ``H'' shape,
  which has length $\leq 5 \delta + 2 \sigma_m(l) \in
  \mathcal{O}(\sigma_m(l))$. For all edges intersecting this H, the
  intersection point should be moved to a grid point without
  increasing the length or violating guillotine property.

  \begin{figure}[hbt] \center
    \includegraphics[width=3.5cm]{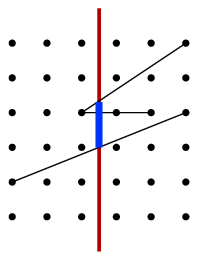}
    \hspace*{1cm}
    \includegraphics[width=3.5cm]{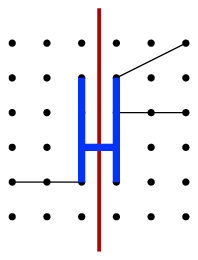}
    \caption{Construction for half-grid cuts}
    \label{pic:half-cut}
  \end{figure}

  To see that the length does not increase, let $p,q$ be the endpoints
  of an edge intersecting the H, then replacing the edge by segments
  from $p$ to the first intersection point $p_H$ and from the last
  intersection point $q_H$ to $q$ preserves connectivity (and parity
  can be correcting using edges of the H). Let $p$ be to the left of
  $l$, then $p_H$ is either the left vertical edge or the bar. In the
  latter case, moving the intersection point to the left endpoint of
  the bar only decreases the length of the segment. In the former
  case, $p_H$ is either a grid point or can be moved
  vertically on the H. In that case, either the edge from $p$ to $p_H$
  is horizontal (so $p_H$ is a grid point because $p$ is), or there is
  a direction such that the angle at $p_H$ gets less acute when moving
  $p_H$, thereby making $(p, p_H)$ shorter. Whenever $(p, p_H)$
  intersects a grid point, we subdivide it and continue with the
  segment containing $p_H$, thus ensuring that the edge set remains
  planar an $(m', M')$-guillotine, if $E$ is. 

  The resulting graph is grid-rounded and $l$ is $m$-good, since the
  $m$-span only contains one point, and this point is part of the edge
  set. It might not be Eulerian, but the only points whose parity
  might have changed are on the H, hence duplicating some of its edges
  will make the edge set Eulerian again. 
  
  \begin{figure}[hbt] \center
    \includegraphics[width=3.5cm]{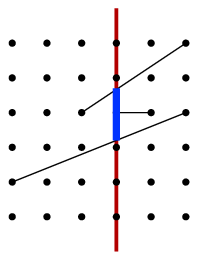}
    \hspace*{1cm}
    \includegraphics[width=3.5cm]{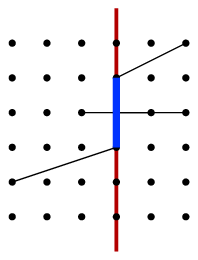}
    \caption{Construction for grid cuts}
    \label{pic:span-again}
  \end{figure}

  If the cut is at a grid coordinate, we increase the length of the
  $m$-span by at most $2 \delta$ as shown in
  Figure~\ref{pic:span-again} (so that it becomes grid-rounded) and
  then proceed analogously to the first case for all edges
  intersecting it.

\end{proof}

\begin{lemma} \label{lem:region-good} Given a perfect
  half-grid $(m+9)$-good cut $l$ in a window $W$, and a connected
  Eulerian grid-rounded edge set $E$ and grid-rounded disk-like
  regions $\mathcal{R}$, then there is a grid-rounded edge set $E'$
  that differs from $E$ only in edges intersecting of $W$, is $(m',
  M')$-guillotine outside $W$ if $E$ is, is by at most an additive
  $\mathcal{O}(\Sigma_M(l))$ longer than $E$, visits the same (or a
  superset) of the grid points $E'$ visits, and is connected and
  Eulerian, such that $l$ is $(m+10)$-good and $(M+C)$-good
  w. r. t. $E'$, $\mathcal{R}$ and $W$. The constant $C$ here depends
  on the constants for the disk-like regions
  (Definition~\ref{def:grid}) and can be chosen as 24 for unit disks.
\end{lemma}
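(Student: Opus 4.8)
The plan is to imitate the construction of Lemma~\ref{lem:m-good}, but for the regions, following the strategy sketched before Figure~\ref{pic:guill}: rather than inserting the $(M+C)$-region-span on $l$ itself (which cannot be grid-rounded, as $l$ sits at a half-grid coordinate), I insert two grid-rounded paths parallel to $l$, one on each side, that together visit every region meeting the $(M+C)$-region-span, and attach them to $E$. Assume \Wlog that $l$ is vertical. If the $(M+C)$-region-span is empty then $(M+C)$-region-goodness in the new sense holds vacuously and I take $E'=E$, which is $(m+9)$-good and hence $(m+10)$-good. Otherwise at least $M+C$ disjoint disk-like regions cross $l$ inside $W$, so Lemma~\ref{lem:afat} gives $\Sigma_{M+C}(l)=\Omega(M)$; in particular $\Sigma_M(l)$ exceeds a positive constant. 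Let $R_1,\dots,R_t$ be the regions meeting the $(M+C)$-region-span. Since $l$ is central and there are $\Omega(M)$ disjoint regions stacked along $l$ beyond each end of the span, each $R_i$ lies a constant distance from all four edges of $W$ (its diameter is at most $d_2$, while centrality keeps it off the edges parallel to $l$ and the stacked buffer regions keep it off those orthogonal to $l$); hence $R_i\subseteq W$ and $E\cap R_i\subseteq W$ for every $i$.

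Next I would build the two visiting paths. Choose grid lines $l_L$ and $l_R$ within distance $\delta$ of $l$ on its left and right, and on each construct a grid-rounded path, $S_L$ near $l_L$ and $S_R$ near $l_R$, running from just below $R_1$ to just above $R_t$. Here the disk-likeness hypotheses are essential: each $R_i$ is $\alpha$-fat (in either sense) of diameter at least $d_1\gg\delta$ and is a grid-rounded polygon, so a vertical grid path that makes horizontal jogs of size $\delta$ near the few regions that $l$ merely grazes stays inside every $R_i$, at a total jog cost of $O(t\delta)=o(1)$. Thus $S_L$ and $S_R$ each visit all of $R_1,\dots,R_t$, with $|S_L|+|S_R|\le 2\Sigma_{M+C}(l)+o(1)$.

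For connectivity and grid-rounding I proceed as in Lemma~\ref{lem:m-good}. Because each $R_i\subseteq W$ is visited by $E$ inside $W$, relabelling the two sides if necessary there is a point $q\in E\cap R_j$ to the left of $l$; I join $S_L$ to $q$ by a grid-rounded segment inside $R_j$ of length $O(d_2)=O(1)$, attaching $S_L$ to $E$. I then join $S_R$ to $S_L$ by a single horizontal grid segment crossing $l$ of length at most $2\delta$; this is the one edge I place across the cut, so it raises the number of edges meeting $l$ by exactly one and turns the given $(m+9)$-good cut into an $(m+10)$-good one. Every existing edge of $E$ that a new segment crosses at a non-grid point is rerouted to a grid point exactly as in the proof of Lemma~\ref{lem:m-good}, which neither lengthens the edge set nor disturbs the $(m',M')$-guillotine structure outside $W$; finally I duplicate a subset of the inserted grid edges to restore Eulerian parity, at most doubling the added length. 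Since I only insert edges and relocate crossings to grid points, $E'$ retains every grid point visited by $E$. After this $S_L$ and $S_R$ visit every region meeting the $(M+C)$-region-span on both sides of $l$, so no two such regions are visited only on opposite sides and $l$ is $(M+C)$-region-good in the new sense.

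It remains to verify the length bound, which is where the buffer $C$ enters. The added length is at most $2\bigl(2\Sigma_{M+C}(l)+O(1)\bigr)=O(\Sigma_{M+C}(l))+O(1)$, and the constant connecting segments are absorbed because, exactly as in the proof of Theorem~\ref{thm:guillotines}, the $C$ buffer regions at each end of the span (for unit disks, $12$ disks per side, whence $C=24$) give $\Sigma_{M+C}(l)+O(1)\le\Sigma_M(l)$, with $\Sigma_M(l)$ bounded below by a constant whenever non-empty; hence the excess is $O(\Sigma_M(l))$ as required. The main obstacle is the simultaneous bookkeeping of the third paragraph: guaranteeing with one fixed grid of resolution $\delta$ that the grid-rounded paths still meet every span region, while the rerouting of crossed edges to grid points preserves length, connectivity, Eulerian parity, and the external guillotine property all at once. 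The fatness and comparable-diameter hypotheses together with $\delta\ll d_1$ are precisely what make these constraints mutually compatible.
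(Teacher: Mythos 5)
Your proposal is in essence the paper's own construction: for a cut strictly between grid lines, replace the (non-grid) region-span by a vertical grid-rounded segment at an adjacent grid coordinate that visits all regions meeting the $(M+C)$-region-span, attach it to $E$ by a connecting segment whose possible crossing of $l$ accounts for $(m+10)$- rather than $(m+9)$-goodness, reroute crossed edges to grid points as in Lemma~\ref{lem:m-good}, correct parity only on new segments, and absorb the $O(1)$ connection cost via $\Sigma_M(l)\geq 2$, which the $C=24$ buffer (12 regions per side) guarantees exactly as in Theorem~\ref{thm:guillotines}. Where you deviate, you insert segments on \emph{both} sides of $l$ joined by a crossing edge; the paper inserts a single segment on one side only. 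Under the redefined notion of $M$-region-goodness (``no two span regions visited only on different sides'') one side already suffices --- that is precisely the payoff of the relaxed definition --- so your two-sided variant is correct but redundant, doubling the added length harmlessly. Also, your appeal to fatness and $\delta\ll d_1$ to argue that $S_L$ and $S_R$ visit every span region is fuzzier than necessary: in the strictly-between case, any boundary edge of a grid-rounded polygon crossing the mid-cell line has both endpoints at grid $x$-coordinates, hence crosses \emph{both} neighboring grid lines, so every span region meets $l_L$ and $l_R$ without any jogs; the fatness hypotheses are not what carries this step. Likewise, demanding a grid-rounded path \emph{inside} $R_j$ for the connection is over-engineered: by Definition~\ref{def:grid} grid-roundedness constrains only edge endpoints, so a straight segment between grid points is admissible, and the paper uses exactly this (paying at most $3\delta$ to move its crossings with existing edges onto the grid).

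There is one genuine gap: by Definition~\ref{def:grid} a ``half-grid'' coordinate includes coordinates \emph{on} the grid, and your entire construction presupposes that $l$ lies strictly between two grid lines (``cannot be grid-rounded, as $l$ sits at a half-grid coordinate''). When $l$ is at a grid coordinate, a grid-rounded region can intersect $l$ without reaching either neighboring grid line --- for instance touching $l$ only in a vertex or a thin sliver of width less than $\delta/2$ --- so neither $S_L$ nor $S_R$ need visit it, and the combinatorial argument above breaks down. In this case the paper's proof handles matters directly and more simply: insert the $(M+C)$-region-span on $l$ itself, extend it to grid $y$-coordinates, move intersection points with edges to the grid, and correct parity on the extended span. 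You must add this case (or argue it separately); with it included, the remainder of your argument goes through with the constants you state.
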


\begin{proof}
  Without loss of generality, let $l$ be vertical. If $l$ is at a grid
  coordinate, extend $\Sigma_{M + C}(l)$ to the grid and move all
  intersection points with edges to the grid as in previous
  lemmas. Correct parity on the extended $(M+C)$-region-span.

  Otherwise, since the regions are polygonal, the set of regions in
  the $(M+C)$-region-span can be visited by a vertical segment of at
  most the same length either at the grid coordinate directly to the
  left or to the right of $l$. Insert this segment and possibly a
  connecting segment to the edge set (which might cross the cut)
  inside of $W$. For all edges intersecting the new vertical segment,
  proceed as before. For the connecting segment, note that it is not
  necessarily rectilinear. Therefore, if it intersects an edge, we can
  subdivide this edge and move the intersection point to a grid
  point. This costs at most $3 \delta$ , because there are 3 incident
  edges, and connects to the edge set -- hence it is sufficient to do
  this at most once. Since $\Sigma_M(l) \geq 2$ by choice of $C$, this
  is not too expensive, and can be done inside $W$. Again, correcting
  parity is only necessary on new segments. 
\end{proof}

Applying these lemmas to the charging scheme yields that there is an
$(m,M)$-guillotine grid-rounded subdivision that approximates a tour
well, more precisely:

\begin{theorem}
  \label{thm:grid}
  Let $\vare > 0$. For every set $\mathcal{D}$ of $k \geq 8$ disjoint
  unit disks within a square of size $\lceil 3k/\vare\rceil \times
  \lceil 3k/\vare \rceil$, let $m = \max\set{\lceil \frac{1}{\vare}
    \rceil, 8}$ and $M = \max \set{\lceil \frac{1}{\vare} \log_2
    (\frac{k}{\vare}) \rceil, 32}$.
  
  Then, there is an edge set $E$ with the following properties: 
  \begin{enumerate}
  \item It satisfies the (new) $(m+9,M+24)$-guillotine property.
  \item The endpoints of every edge are on a regular rectilinear grid
    with edge length $\delta = (2\lceil k/\vare \rceil)^{-2}$.
  \item It visits at least one point from each of $\Gamma_1, \dots,
    \Gamma_k$, the grid points of the slightly perturbed, polygonal
    approximations of the disks.
  \item It is Eulerian and connected. 
  \item The total length of all its segments is
    $(1+\mathcal{O}(\varepsilon))L^*$, where $L^*$ is the length of an
    optimum tour visiting $\mathcal{D}$ (or $\Gamma_1, \dots,
    \Gamma_k$, as both lengths only differ by a factor of at most $1 +
    \vare$).
  \end{enumerate}
\end{theorem}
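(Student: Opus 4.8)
The plan is to assemble Theorem~\ref{thm:grid} by combining the structural results already in place: Theorem~\ref{thm:guillotines} gives the existence of a connected Eulerian $(m,M+24)$-guillotine subdivision of the right approximate length (before grid-rounding), and Lemmas~\ref{lem:m-good} and~\ref{lem:region-good} provide the machinery to move each inserted span onto the grid while preserving connectivity, parity, and the (new) region-good property. So first I would invoke the preprocessing from the ``Grid'' subsection: replace each disk $D_i$ by $\conv(\Gamma_i)$ (still $\alpha$-fat$_E$ for slightly smaller $\alpha$, hence disk-like with modified constants), and assume at cost $(1+\vare)$ that the optimum tour is grid-rounded and that every region is visited by a grid-point endpoint of an edge. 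This immediately yields properties (2) and (3) as invariants to maintain, and reduces the existence of the target $E$ to running the guillotine recursion on grid-rounded data.

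The core step is to re-run the recursive partition of $W_0$ from Theorem~\ref{thm:guillotines}, but at each level selecting a perfect \emph{half-grid} cut, which exists by Lemma~\ref{lem:grid-cut}. At each such cut $l$, instead of literally inserting the $m$-span and $(M+24)$-region-span (which need not have grid endpoints), I would apply Lemma~\ref{lem:m-good} to produce a grid-rounded edge set making $l$ $(m+9)$-good at additive cost $\mathcal{O}(\sigma_m(l))$, and then apply Lemma~\ref{lem:region-good} to the resulting $(m+9)$-good cut to make it additionally $(M+24)$-region-good at additive cost $\mathcal{O}(\Sigma_M(l))$, ending up $(m+10)$-good — this accounts for the final index shift to $(m+9,M+24)$. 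Both lemmas are stated so as to modify $E$ only on edges intersecting $W$, to preserve the guillotine property already achieved outside $W$, and to keep the graph connected, Eulerian, and grid-rounded; this is exactly what makes the recursion compose cleanly, so properties (1) and (4) follow by induction on the recursion depth, with termination handled as in Theorem~\ref{thm:guillotines} via weak centrality of the cuts.

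For property (5) I would reuse the charging scheme of Theorem~\ref{thm:guillotines} essentially verbatim, replacing the abstract operation ``insert the span'' by the two grid-rounding constructions. Since each construction costs only a constant factor times $\sigma_m(l)$ respectively $\Sigma_M(l)$, and the cut is $8$-favorable, the charge bounds $C'/m$ per edge segment and $C/M$ per disk-boundary segment survive with enlarged constants; summing via $L^*\geq(k/4-1)\cdot 2\pi/4$ as in the earlier proof gives total additional length $(\mathcal{O}(1/m)+\mathcal{O}(1/M))L^*$. With $m=\max\{\lceil 1/\vare\rceil,8\}$ and $M=\max\{\lceil\frac1\vare\log_2(k/\vare)\rceil,32\}$ this is $\mathcal{O}(\vare)L^*$, and absorbing the $(1+\vare)$ factors from grid-rounding the instance and the tour yields the $(1+\mathcal{O}(\vare))L^*$ bound, with the final parenthetical in (5) covered by the stated equivalence of the two optimum lengths up to $1+\vare$.

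The main obstacle I anticipate is the bookkeeping of indirect charge once the span-insertion is replaced by the grid constructions: the connecting segments and the ``H''/box patching may have their length accounted for by $m$-dark rather than $M/2$-region-dark segments (and vice versa), exactly the subtlety flagged at the end of the Theorem~\ref{thm:guillotines} proof. I would handle this the same way — pass the charge recursively back to the original edges of $E^*$ or disk boundaries $\partial D$, bound the resulting geometric series using $m\geq 32$ (so the indirect charge stays $\mathcal{O}(1/m)$ and $\mathcal{O}(1/M)$), and verify that the slightly larger per-cut constants from Lemmas~\ref{lem:m-good} and~\ref{lem:region-good} do not break convergence. The remaining delicate point is confirming that the two lemmas' guarantee ``preserves the guillotine property outside $W$'' genuinely lets the recursion interleave $m$-good and region-good steps at the same cut without a later cut undoing an earlier one; this is where the new definition of $M$-region-good (closed half-spaces, monotone under edge insertion) does the essential work.
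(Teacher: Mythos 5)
Your proposal is correct and takes essentially the same route as the paper's own (very brief) proof: start from a grid-rounded optimal tour, recursively choose perfect half-grid cuts (Lemma~\ref{lem:grid-cut}), make each cut $(m+9)$-good and $(M+24)$-region-good via Lemmas~\ref{lem:m-good} and~\ref{lem:region-good}, and bound the added length by reusing the charging scheme of Theorem~\ref{thm:guillotines}. Your write-up is in fact more detailed than the paper's two-sentence proof sketch, and the obstacles you flag (indirect charge for the grid constructions, composability of the two lemmas across the recursion) are exactly the points the paper delegates to those lemmas and the earlier charging argument.
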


This theorem implies an approximation ratio of
$(1+\mathcal{O}(\vare))$ for Mitchell's dynamic programming algorithm
for the TSPN with disjoint unit disks, if the grid and $m$ and $M$ are
chosen as above, and for these parameters, such a subdivision can be
found by Mitchell's algorithm (together with the refinements to
preserve grid-roundedness) in polynomial time
(Theorem~\ref{thm:main}). 

\begin{proof}[Theorem~\ref{thm:grid}]
  Using the two previous lemmas, one can construct $E'$ as follows:
  Starting with an optimal tour on the grid, recursively find a
  perfect half-grid cut, insert edges so that it becomes $(m+9)$-good
  and $(M+24)$-region-good, and continue with the new subwindows. The
  edge set remains a tour, and becomes $(m+9, M+24)$-guillotine. The
  increase in length can be bounded using the same charging scheme as
  in Theorem~\ref{thm:guillotines}.
\end{proof}

\section{Conclusion}
The guillotine subdivision method of Mitchell \cite{mitchell-tsp,
  mitchell-ptas} can be used to derive a PTAS for the TSP with unit
disk neighborhoods. All arguments carry over to disk-like regions, for
which Mitchell's framework can be used to derive a PTAS as well. This
includes geographic clustering as the special case when the regions
are $\alpha$-fat (they could be $\alpha$-fat\E instead).

However, the approach of Bodlaender et al. \cite{grigoriev} based on
curved dissection in Arora's PTAS for TSP \cite{arora} achieves a
faster theoretical running time for disjoint connected
regions with geographic clustering. Their algorithm, like Arora's, can
be generalized to more than two dimensions. 

For $\alpha$-fat\E regions, the best known results are the constant-factor
approximation algorithm of \cite{elb} and the QPTAS of
\cite{elb-qptas}. 

For $\alpha$-fat regions in Mitchell's sense, the existence of a PTAS
remains open. The problem with external regions of
Section~\ref{sec:ext} can be avoided by using $\alpha$-fat\E or convex
regions instead, the charging scheme (Section~\ref{sec:charge}) can be
fixed by bounding the ratio of perimeter and diameter, the grid can be
handled as in the unit disk case, and localization does not require
any additional assumptions. However, even for those stronger
conditions on the regions, it is unclear how to handle connectivity
(Section~\ref{sec:cnn}) for neighborhoods of varying size. The length
of the connecting segment can be bounded for $\alpha$-fat\E regions as
in the unit disk case, but it might still destroy the guillotine
property of other windows. To our knowledge, no PTAS for any form of
the TSP with neighborhoods of varying size exists.

Mitchell's constant factor approximation algorithm for disjoint
connected regions \cite{mitchell-cfa} relies on the PTAS for
$\alpha$-fat regions, but only applies it to disjoint balls, which are
$\alpha$-fat$_E$. Therefore, a constant factor approximation algorithm
by Elbassioni et al. \cite{elb} can be used instead, so that the
overall algorithm in \cite{mitchell-cfa} still works and yields a
constant factor approximation for the TSP with general disjoint
connected, and in particular $\alpha$-fat, regions.

\end{document}